\newtheorem{theorem}{Theorem}[section]
\newtheorem{definition}[theorem]{Definition}
\newtheorem{corollary}[theorem]{Corollary}
\newtheorem{lemma}[theorem]{Lemma}
\newtheorem{claim}[theorem]{Claim}
\newcommand{\bLength}{0.85\columnwidth}
\newcommand{\myStrut}{\rule[-.4\baselineskip]{0pt}{\baselineskip}}	%
\newcommand{\LongText}[1]{\parbox[t]{\bLength}{#1\myStrut}}
\newcommand{\done}{\texttt{done}\xspace}
\newcommand{\ctive}{\texttt{active}\xspace}  %
\newcommand{\con}[1]{\overset{#1}{\sim}}   %
\newcommand{\cI}{\mathcal{I}}
\newcommand{\cO}{\mathcal{O}}
\newcommand{\cP}{\mathcal{P}}
\newcommand{\cR}{\mathcal{R}}
\newcommand{\cL}{\mathcal{L}}
\newcommand{\cK}{\mathcal{K}}
\newcommand{\cA}{\mathcal{A}}
\newcommand{\cS}{\mathcal{S}}
\newlength{\hatchspread}%
\newlength{\hatchthickness}%
\newlength{\hatchshift}%
\newcommand{\hatchcolor}{}%
\tikzset{hatchspread/.code={\setlength{\hatchspread}{#1}},
         hatchthickness/.code={\setlength{\hatchthickness}{#1}},
         hatchshift/.code={\setlength{\hatchshift}{#1}},%
         hatchcolor/.code={\renewcommand{\hatchcolor}{#1}}}%
\tikzset{hatchspread=0.9pt,
         hatchthickness=0.3pt,
         hatchshift=0pt,%
         hatchcolor=black}%
\begin{document}

\title{\vspace{-1.5cm}\parbox{1.2\textwidth}{\hspace{-1em}The Topology of Randomized Symmetry-Breaking Distributed Computing}}

\date{} 	%

 \author{
 Pierre Fraigniaud\thanks{Universit\'e de Paris and CNRS, \texttt{pierre.fraigniaud@irif.fr}}
 \and 
 Ran Gelles\thanks{Bar-Ilan University, \texttt{ran.gelles@biu.ac.il}} 
 \and 
 Zvi Lotker\thanks{Bar-Ilan University, \texttt{zvi.lotker@biu.ac.il}}
 }

\maketitle

\begin{abstract}
Studying distributed computing through the lens of algebraic topology has been the source of many significant breakthroughs during the last two decades, especially in the design of lower bounds or impossibility results for \emph{deterministic} algorithms. 
In a nutshell, this approach consists of capturing all the possible states of a distributed system at a certain time as a simplicial complex~$\cP(t)$, called \emph{protocol complex},  and viewing computation as a \emph{simplicial} map from that complex to the so-called \emph{output complex}, $\cO$, that captures all possible legal output states of the system. The topological properties (e.g., homotopy) of the protocol complex depends on the properties of the \emph{input complex}~$\cI$ (i.e., the complex $\cP(0)$ of all possible initial states), and on the communication model. Studying these properties lead to proving or disproving the existence of a simplicial map from $\cP(t)$ to~$\cO$. 

This paper aims at studying \emph{randomized} synchronous distributed computing through the lens of algebraic topology. We do so by studying the wide class of (input-free) \emph{symmetry-breaking} tasks, e.g., leader election, in synchronous fault-free anonymous systems, that is, when the absence of IDs prevents the processing nodes from breaking symmetry in a deterministic manner. 
We show that it is possible to redefine solvability of a task ``locally'', i.e., for each simplex of the protocol complex \emph{individually}, without requiring any global consistency. 
However, this approach has a drawback: it eliminates the topological aspect of the computation, since a single facet has a trivial topological structure. To overcome this issue, we introduce a ``projection''~$\pi$ of both protocol and output complexes, where every \emph{simplex}~$\sigma$ is mapped to a \emph{complex}~$\pi(\sigma)$; the later has a rich structure that replaces the structure we lost by considering one single facet at a time.

We show that a facet of~$\cP(t)$ solves the task if and only if there exists a simplicial map from the projection of that facet to the projection of some facet of the output complex. This approach allows us  to use topological arguments to establish which facets of the protocol complex solve the task, and which do not. 
By doing so, one can compute the probability to solve a task at time~$t$ by analyzing each facet independently, and summing the probabilities of all facets that solve the task. An application of Kolmogorov's zero--one law establishes which symmetry-breaking tasks are \emph{eventually} solvable when $t\to\infty$.

To show the significance and applicability of our approach, we derive necessary and sufficient conditions for solving leader election in synchronous fault-free anonymous  shared-memory and message-passing models. In both models, we consider scenarios in which there might be correlations between the random values provided to the nodes. In particular, different parties might have access to the same randomness source so their randomness is not independent but equal. 
Interestingly, we find that solvability of leader election relates to the number of parties that possess correlated randomness, either directly or via their greatest common divisor, depending on the specific communication model.
\vfil
\end{abstract}

\section{Introduction}
There are two main categories of distributed algorithms: deterministic and randomized. The difference between them is stark as some tasks are solvable  when randomness is present but cannot be solved deterministically. 
Further, randomness often yields faster and more efficient algorithms for tasks solvable deterministically. 
A main example is the task of electing a leader in anonymous systems~\cite{le1977distributed,GHS83}, where $n>1$ identical computing nodes need to designate a single node as their leader. For any deterministic algorithm running in symmetric systems where all nodes are identical, all nodes return exactly the same output. This is known as the impossibility of deterministic algorithms to break symmetry~\cite{angulin80}. On the other hand, if each node has access to an independent source of randomness, then symmetry can be broken, allowing electing a leader~\cite{ASW88}.

It is widely agreed that the major breakthrough of computer science is the invention of the Turing machine~\cite{turing1936}, which abstracts all possible realistic computing machines into a single \emph{universal} model. 
With this abstraction, researchers can ignore the system's specific properties and focus on the computational aspect of the problem rather than on its manifestation in a specific environment.
In the distributed world, while each node is still a Turing machine, the environment plays a large role in setting the relevant model: nodes and communication links may be reliable or subject to various forms of failures. The system may be synchronous or experience arbitrary large communication delays. Also, all nodes might be connected as a single hop or connected in an arbitrary manner. 
This large variety of models required focusing on each model and analyzing each task and algorithm separately. 
In the early 1990s, a unifying framework was developed that captures all the above distributed models: distributed computing via \emph{algebraic topology}~\cite{BG93,HS93,SF93}. 
Similar to the Turing model, 
this framework enables analyzing the complexity of distributed tasks under a unique umbrella  in which the same tools and methodologies apply, independently from the details of the actual distributed model of distributed computing.
However, these pioneering works and the numerous subsequent work that followed them
(see~\cite{Book-topology}, and the references therein) mostly
treated deterministic algorithms, and randomized algorithms remain so far excluded.

The usual topology-analysis of deterministic distributed algorithms~\cite{Book-topology} involves a simplicial complex~$\cP(t)$, called  \emph{protocol} complex, that captures all the possible global states of the distributed system at time~$t\geq 0$, 
where each global state is modeled as a simplex of that complex.\footnote{See Appendix~\ref{app:topology101} for a summary of the topological notions used in this paper.}
Initially $\cP(0)=\cI$, the so-called input complex that captures all initial global states of the system. 
The evolution of the system with time translates to the evolution of the complex~$\cP(t)$, $t\geq 0$. An algorithm solving a task in time~$t$ is then modeled as a simplicial map $\delta:\cP(t)\to\cO$ from the protocol complex to the output complex~$\cO$, where the latter  captures all legal final global states of the system. This mapping may need to satisfy some additional requirements in the case of input-output tasks (e.g., consensus) where the legality of the output depends on the input. 
It turns out that, for many distributed models, certain properties of a topological space, such as connectivity or homotopy type,
are preserved during the evolution of the protocol complex with time. 
Comparing the topological properties of the protocol complex with those of the output complex is a fruitful approach for establishing impossibility results, that is, proving that the mapping~$\delta$ cannot exist. A typical example is the task of consensus in the crash-prone asynchronous shared-memory model in which the protocol complex remains connected throughout time while the output complex is disconnected, preventing the simplicial map~$\delta$ from existing.  

Extending this approach to the analysis of randomized algorithms requires overcoming several difficulties. 
In particular, for a large class of tasks that are not solvable deterministically in anonymous systems, including symmetry breaking tasks such as leader election,  the solvability of these tasks using a randomized algorithm is only \emph{eventual}. 
That is, there is no fixed $t\geq 0$ for which there exists a simplicial map  $\delta:\cP(t)\to\cO$, even if we allow~$t$ to be as large as we wish. 
Nevertheless, the task is eventually solvable, meaning that when $t$~goes to infinity, the probability for solving the task approaches~1. 
One may argue that, for a fixed $t\geq 0$, one can still compute the probability of solving the task at time~$t$ by considering a mapping $\delta_t:V(\cP(t))\to V(\cO)$ (not necessarily simplicial), then considering only the facets of $\sigma\in\cP(t)$ such that $\delta_t(\sigma)\in\cO$, and finally computing the sum of the probability of each of these facets to occur as a function of the outcomes of the random bits used in the protocol. 
This argument is valid. However, it may not be of practical use. Indeed, one is losing the whole point of using topology, as it might be difficult to analyze what topological properties (e.g., homotopy type) are preserved by such an arbitrary map~$\delta_t$. 
Additionally, it might be difficult to relate the topological properties of the ``best'' map $\delta_t$ (i.e., the one that maximizes the probability of solving the task at time~$t$) with the properties of the best map $\delta_{t+1}$ at time~$t+1$. 

In this paper, we give the first stab at developing a topological framework that captures synchronous randomized algorithms of certain types, which we now describe.

\subsection{A Topological Approach for Randomized Algorithms}

We consider input-free symmetry-breaking tasks:  
tasks solely defined by their output complex~$\cO$, which we require to be symmetric (i.e., stable to permutation of the processing nodes names). 
For instance, this is the case of leader election in which there are no constraints on which node may be the leader. 
In this context, a crucial observation is that the solvability of a task at a time~$t$ can be analyzed for each facet $\sigma\in\cP(t)$ separately. 
For each facet $\sigma\in\cP(t)$, we can say that $\sigma$ solves the task~$\cO$ whenever there exists a simplicial map $\delta:\sigma\to\cO$ which maps the state of each node to its output, such that the output value is independent of the actual name of the node. 
The interest of that observation is that it decouples the map $\delta:\sigma\to\cO$  from the map $\delta':\sigma'\to\cO$ that may exist for another facet~$\sigma'$ of~$\cP(t)$. 
The significant drawback of this approach is that, as the aforementioned general approach consisting of considering arbitrary maps $\delta_t:V(\cP(t))\to V(\cO)$, one is losing connection with topology. 
Indeed, the facet $\sigma$ has a trivial topological structure, as it is just a simplex of~$\cP(t)$. 
Our approach, on the other hand, provides a structure to each simplex of~$\cP(t)$; this allows us to keep analyzing the system using topological tools.

To provide the simplices of~$\cP(t)$ with a structure, we ``project'' each simplex~$\sigma\in\cP(t)$ to a sub-complex $\pi(\sigma)$ of $\sigma$, the latter being viewed as a complex.\footnote{To be precise, this complex is the induced subcomplex of~$\cP(t)$ on $V(\sigma)$.} 
Roughly, a set of nodes forms a simplex in $\pi(\sigma)$ if they have an identical individual state in~$\sigma$, where the individual state of a node results from the outcomes of its source of randomness and from its knowledge about the outcomes of the other nodes (acquired during the communications up to time~$t$). 
In a sense, a simplex in $\pi(\sigma)$ bears the meaning that the algorithm up to time~$t$ did not  break symmetry for the nodes associated with this simplex vertices.  
The sub-complex $\pi(\sigma)$ captures the internal state of each one of the parties in a specific execution and portrays similarity in the knowledge of the parties.  
The projection $\pi$ is called \emph{consistency projection}. 
Thanks to the consistency projection, we have regained structure, which we can now utilize to determine the ability of a facet $\sigma\in\cP(t)$ to solve a task~$\cO$. A facet $\sigma$ solves the task~$\cO$ whenever there exists a (name-preserving) simplicial map $\delta:\pi(\sigma)\to\pi(\tau)$ for some facet $\tau\in\cO$, where $\pi(\tau)$ is the projection of $\tau$, that is, the 
sub-complex of $\tau$ in which a set of vertices of $\tau$ forms a simplex in $\pi(\tau)$ if they have identical individual output value in~$\tau$.

Granted with the notion of solvability for each facet of the protocol complex at time~$t$, we show how to define  the \emph{eventual} solvability of the task.  Specifically, let us define the probability $p(t)$ that $\cP(t)$ solves $\cO$ as the sum, taken over all facets of $\cP(t)$ that solve~$\cO$, of the probability of each of these facets. 
Eventual solvability is then based on Borel-Cantelli's Lemma and on Kolmogorov's zero--one law stating that any \emph{tail event} occurs with probability zero or one. 
As a consequence, we can show that $\lim_{t\to\infty}p(t)\in\{0,1\}$, and to forbid mixed answers where tasks are ``solvable'' with some probability~$p\in(0,1)$. Therefore, this provides us with a complete (deterministic) characterization of eventual solvability, namely, an input-free symmetry-breaking task $\cO$ is eventually solvable if $\lim_{t\to\infty}p(t)=1$.

\medbreak 

To sum up this part, given an input-free symmetry-breaking task, i.e., a task represented by a symmetric output complex~$\cO$, we say that a global state~$\sigma \in \cP(t)$ solves~$\cO$ if there exists a name-preserving simplicial map $\delta: \pi(\sigma)\to \pi(\tau)$ for some $\tau\in\cO$, where $\pi$ is the consistency projection. 
For any $t\geq 1$, let $\cS(t)$ be the set of all the global states~$\sigma \in \cP(t)$ that solves the task~$\cO$, and let us define $\Pr[\mathcal{S}(t)] = \sum_{\sigma\in \cS(t)}\Pr[\sigma]$.
We say that $\cO$ is eventually solvable if and only if $\lim_{t\to \infty} \Pr[\mathcal{S}(t)]=1$.

\subsection{Application to Leader Election in Anonymous Models}

To demonstrate the interest and usability of the topological approach of randomized algorithms sketched in the previous section, we study the arguably most prominent symmetry-breaking task, namely \emph{leader election}, in two important models for anonymous computing, that is, the blackboard model and the message-passing model with port-numbers. 
While in this paper we focus on leader election, we note that the vast majority of other symmetry-breaking tasks in anonymous message-passing networks are trivially solvable once a leader can be elected; see~\cite{EPSW14}, for instance.\footnote{Another reason to focus on leader election is that many other important tasks, such as reaching consensus or $k$-set agreement, are \emph{deterministically}  solvable  in the fault-free setting. See also Appendix~\ref{app:ni-task}.} %
Moreover, we consider an exhaustive range of randomness source assignments to the nodes, from private randomness (each node has its own independent source of randomness) to shared randomness (all nodes get their random bits from the same source). Indeed, in many real-life situations, we observe dependencies and correlations between the different randomness sources, and this is in particular noticeable in distributed systems as the parties are all deterministic machines that obtain their randomness via pseudo-random generators, which might cause the randomness of some (or all) parties to be correlated or even identical. 
This is not just a theoretical whim: recent work showed that the same random SSH keys were generated by more than 250,000 ``independent'' devices~\cite{ssh15} and that 1 out of 172 RSA-based certificates found online (i.e., about 450,000 certificates) have a random RSA key that shares a factor with the key of another ``independent'' certificate, allowing the complete factorization of these RSA keys and breaking the security of these certificates~\cite{KV19}. 

Specifically, we model our setting as follows.
The system is composed of~$n$ parties, where all parties are connected to each other either via an anonymous broadcast channel (blackboard model) or via private point-to-point channels (message-passing model). In the latter model, each private channel of a party is locally identified by an index in $\{1,\dots,n-1\}$, called port-number~\cite{JS85}. Other than that, the $n$ parties are identical and anonymous (they have no IDs). 
There are $k$, $1\leq k\le n$, independent sources of randomness~$\mathbf{R}_1,\ldots, \mathbf{R}_k$, where each randomness source generates an infinite sequence of i.i.d random bits, each bit being picked uniformly at random. Each party is connected to a single randomness source. However, different parties may be connected to the same source, hence, their randomness is completely dependent (i.e., identical).
This modeling is simple, but it allows us to capture both the case where all parties have independent randomness as well as the case where all parties share the same randomness source, or any situation in between---where some parties are correlated and others are not. 

One can easily verify that the correlations between parties bring up
interesting questions. In particular, under which assumption on the correlations between the parties is 
leader election possible? There are trivial answers under specific scenarios. For instance, if each party is connected to a private source of randomness, then leader election is eventually solvable, even in the weak blackboard model (e.g., whenever one party gets a bit~1 while all the other parties get bits~0). Instead, if all parties are connected to the same randomness source, then symmetry cannot be broken in the blackboard model, and leader election is impossible.  But what about intermediate cases? And what about the message-passing model in which the  port-numbers may assist in breaking symmetry? 

\medskip

We show that, in the blackboard model, leader election is eventually solvable if and only if there exists a randomness source~$\mathbf{R}_i$, $i\in\{1,\dots,k\}$, that is connected to a single party (Theorem~\ref{thm:LE-BB}). 
In the message-passing model, we show that whenever $n_i\geq 1$ parties are connected to the randomness source~$\mathbf{R}_i$,  $i\in\{1,\dots,k\}$, leader election is eventually solvable if and only if $\gcd(n_1,n_2,\ldots,n_k)=1$ (Theorem~\ref{thm:le-PN}). 
The latter result must be read as follows. 
If $\gcd(n_1,n_2,\ldots,n_k)=1$, then leader election is eventually solvable for every assignment of the port-numbers to the channels, and if $\gcd(n_1,n_2,\ldots,n_k)>1$, then there is an assignment of the port-numbers to the channels such that leader election is not eventually solvable. 
The intuition behind these two theorems is somewhat immediate.
In the blackboard model, if two processes (or more) possess the same randomness source, then their states  will be identical throughout the computation, hence none of them can be elected a leader. On the other hand, if there is a single process with a unique source of randomness, then with probability 1 this process eventually arrives at a state distinct from all other processes. In the next round, every process will see that this has happened and elect the distinct process as a leader. 
In the message-passing model, the GCD condition guarantees that processes with the same randomness source $\mathbf{R}_i$ can ``match'' themselves with processes that have a different randomness source~$\mathbf{R}_j$. At the end of this step, any process in the smaller set will have a match from the other set. 
The key observation is that if we deactivate all the matched-processes that belong to the \emph{larger} set, then the GCD condition still holds. Thus, we can repeat this matching process recursively, similar to Euclid's GCD algorithm~\cite{Euclid}, until we reach a set with a size of~1. That process will be the leader.
One can verify that this idea cannot work when the GCD is not~1, since eventually one of the sets will have a size of~0, while the size of the other set will be a multiple of the GCD. 
Indeed, if the GCD is~$g$, we show that every process belongs to a set (of size $c\cdot g$ for some $c\in \mathbb{N}$) such that all processes within the same set share the exact same state.

While the above intuition could lead to a direct proof for leader election solvability, it is almost straightforward to derive  some of the above insights by applying our topological framework. 
The above intuition might seem trivial in hindsight, but such insights might not be simple to come up with a priori. As an example, consider the conditions for the solvability of 2-leader election, i.e., obtaining exactly two leaders. The topological framework immediately leads to the correct characterization of this problem and many others. We encourage the reader to find a direct characterization in both the blackboard and message-passing models, and then compare it with the characterization obtained via the topological framework. 

\medskip
In order to establish the above characterization, we apply the topological approach described in the previous sub-section. Leader election is represented by the output complex $\cO_{\mathsf{LE}}$ defined as follows. 
A set $\{(i,x_i):i\in \{1,\dots,n\}\}$ is a facet of $\cO_{\mathsf{LE}}$ if $\{x_1,\dots,x_n\}=\{0,1\}$, and there is a unique $i\in\{1,\dots,n\}$ such that $x_i=1$. 
See Figure~\ref{fig:OLE} for an illustration of $\cO_{\mathsf{LE}}$ and its consistency projection.
First, we characterize the global states~$\sigma \in \cP(t)$ that solve~$\cO_{\mathsf{LE}}$, that is, the states for which there exists a name-preserving simplicial map $\delta: \pi(\sigma)\to \pi(\tau)$ for some $\tau\in\cO_{\mathsf{LE}}$, where $\pi$ is the consistency projection.
Then, given the set $\cS(t)$  of the global states~$\sigma \in \cP(t)$ that solves leader election, we compute $\Pr[\mathcal{S}(t)\mid \alpha] = \sum_{\sigma\in \cS(t)}\Pr[\sigma\mid \alpha]$ for every configuration $\alpha$ of the connections between the parties and the randomness sources. 
Finally, we compute $\lim_{t\to \infty} \Pr[\mathcal{S}(t)\mid\alpha]$ for figuring out under which condition this limit is~1, which establishes eventual solvability.

Computing the probability of a global state $\sigma\in\cP(t)$ to occur (i.e., computing $\Pr[\sigma\mid \alpha]$)  is, however, non-trivial. 
So, we introduce another complex, called the \emph{realization} complex at time~$t$, denoted by $\cR(t)$. The vertices of $\cR(t)$ are pairs $(i,x_i)$ where $i\in\{1,\dots,n\}$, and $x_i\in\{0,1\}^t$. Each set of vertices $\{(i,x_i):i\in I\}$ is a simplex of $\cR(t)$, for every non-empty set $I\subseteq\{1,\dots,n\}$. 
The interest of $\cR(t)$ is that given an assignment $\alpha$ of the randomness sources to the nodes and given a facet $\rho$ of $\cR(t)$, it is easy to compute $\Pr[\rho \mid \alpha]$. Interestingly, it turns out that the facets of~$\cR(t)$ are isomorphic to the facets of~$\cP(t)$. That is, we show that a specific realization at time~$t$ of the randomness sources fully and uniquely defines the states of the parties at time~$t$ in the protocol complex~$\cP(t)$. This makes the computation of $\Pr[\sigma\mid \alpha]$ easier for each facet $\sigma$ of $\cP(t)$, by computing the probability $\Pr[\rho\mid \alpha]$ of the corresponding facet $\rho$ of $\cR(t)$. 

Thanks to the isomorphism between the facets of~$\cP(t)$ and~$\cR(t)$, we can focus on determining whether a given facet $\rho$ of $\cR(t)$ solves leader election. For this purpose, we introduce a variant of the consistency projection, denoted by $\tilde{\pi}$, that applies to facets of~$\cR(t)$. Then we study the existence of a name-preserving simplicial map $\delta:\tilde{\pi}(\rho)\to\pi(\tau)$ for some facet $\tau\in\cO_{\mathsf{LE}}$ for determining whether the facet $\rho$ of $\cR(t)$ solves leader election or not. 
Since for any facet~$\tau\in\cO_{\mathsf{LE}}$, $\pi(\tau)$ contains an isolated vertex, for any realization~$\rho$ that potentially solves leader election, $\tilde\pi(\rho)$ must contain an isolated vertex as well. 
In the blackboard model, the dimension of the smallest facet in~$\tilde\pi(\rho)$ is $\min\{n_1,n_2,\ldots,n_k\}-1$, and thus the presence of an isolated node in~$\tilde\pi(\rho)$ requires that  $\min\{n_1,n_2,\ldots,n_k\}=1$.
In the message-passing model, we prove that there exists some way to assign port-numbers to the channels such that, for any facet $\gamma\in\tilde\pi(\rho)$ with dimension~$d$, it holds that
$\gcd(n_1,n_2,\ldots,n_k) \mid d + 1$. Therefore, if $\gcd(n_1,n_2,\ldots,n_k)>1$, then there are no isolated nodes in~$\tilde\pi(\rho)$, and the task cannot be solved. 
In order to prove the other direction, i.e., that if 
$\gcd(n_1,n_2,\ldots,n_k)=1$ then leader election is eventually solvable, we describe an algorithm that imitates Euclid's algorithm~\cite{Euclid} for computing the greatest common divisor of the dimensions of the facets in~$\tilde\pi(\rho)$, until reaching an isolated vertex.

\subsection{Related Work}
As mentioned above, the pioneering work~\cite{BG93,HS93,SF93} formulated distributed computations in the language of algebraic topology in order to show impossibility results in the presence of failures.
A tremendous amount of subsequent work is described in the book of Herlihy, Kozlov, and Rajsbaum~\cite{Book-topology} (see also dozens of citations therein). We mention that the language of algebraic topology was found useful to analyze systems both in the \emph{message-passing} model as well in the \emph{shared-memory} model, and it can even be extended to capture non-benign faults, like Byzantine failures~\cite{MTH14}.

\smallskip
Leader election has been extensively studied as an interesting special case of symmetry-breaking between nodes, usually in anonymous systems.
Angluin~\cite{angulin80} showed that no deterministic algorithm could elect a leader in anonymous networks (usually, in a ring) while it is possible to elect a leader non-deterministically or probabilistically.
A long line of leader election algorithms, as well as lower bounds, were developed for certain special cases%
~\cite{AAGHK89,AM94,ASW88,DMR08,IR90,KPPRT15,peleg90,SS94}. 
Mostly related to this paper is the work of
Yamashita and Kameda~\cite{YK96}, which fully characterize the solvability of deterministic leader election over general graphs in the message-passing model; their characterization follows from considering various types of \emph{symmetries} in the graph. 
Boldi et al.~\cite{BSVXGS96}  also give a full characterization of solvability for leader election in networks with and without port-numbers, using a method of graph-homomorphisms known as graph covering or fibrations. This was later extended by Chalopin et al.~\cite{CGM12} to families of graphs. (Codenotti et al.~\cite{CGPS97} mention that leader election in $K_{m,n}$ is possible if and only if $\gcd(m,n)=1$).
In contrast to~\cite{BSVXGS96,CGM12,YK96}, which require a complicated analysis of the structure of the network, 
our characterization is much more straightforward and intuitive, and is based only on the greatest common divisor of the sizes of the subsets of parties connected to the same randomness source --- however, this clean characterization applies only for the clique.
There is plenty of other work on leader election in various models and settings. We surveyed above only the most relevant work to our model.  
No prior work considered the interesting case of symmetry-breaking in correlated randomness settings to the best of our knowledge.
We stress again that the analysis of leader election is merely a single example of our framework and machinery for topological analysis of randomized distributed algorithms.

\section{The Model}

\subsection{Communication and Randomness}

\paragraph{Communication model.}
We consider $n\geq 1$ identical fault-free processing nodes with no identifiers (i.e., they are anonymous) running the same algorithm. The nodes perform computation and communication in lockstep, that is, we assume synchronous rounds. For $r\geq 1$, the $r$-th round occurs between time~$r-1$ and time~$r$. During each round every node can send a message to each other node, and can receive messages from the other nodes. The size of the messages is finite but unbounded. Without loss of generality, we can assume that each node sends its entire history to the other nodes at every round, which is a complete description of all the information accumulated by the node during the previous rounds, including its inputs, the content of the messages that were sent and received, when these messages were sent and received, etc. We consider two sub-models regarding the way each node communicates with the other nodes in the system.
\begin{itemize}
\item
\emph{The blackboard model:}
There is a shared memory called \emph{blackboard}, and each node can send information to the other nodes by appending a message to the blackboard. Every message written on the blackboard by a node at the beginning of a round can be seen by all the other nodes at the end of the round. However, there are no indications about which node is the origin of a message written on the board. Furthermore, the order in which the messages appear on the blackboard during a single round is arbitrary; without loss of generality we will assume all the messages written to the board in a single round appear on it in a lexicographic order.
\item \emph{The message-passing model:}
The nodes are connected as a clique~$K_n$, and they communicate by passing messages through the edges of~$K_n$. A message sent by a node~$u$ through its incident edge~$e$ at the beginning of a round is received by the other extremity of~$e$ at the end of the round. The $n-1$ edges incident to every node~$u$ are labeled by $n-1$ distinct integers in $\{1,\ldots, n-1\}$. The label given to an edge~$e$ incident to node~$u$ is called the port number of~$e$ at~$u$. The port numbers are arbitrary, and there are no correlations between the two port numbers of an edge at its two extremities. Our results hold for the worst case assignment of port numbers, that is, they can be assumed to be assigned by an adversary. 
\end{itemize}
In both cases, there are no restrictions on the amount of information that can be transmitted during a round, that is, there are no restrictions on the size of the messages to be written on the blackboard or to be sent through the links of the network. In other words, we assume \emph{full information} protocols. 

\paragraph{Randomness.}

Given a positive natural number~$m$ we denote by~$[m]$ the set~$\{1,\ldots,m\}$. 
We assume that the system has access to $k$ independent  sources of randomness, denoted by $\mathbf{R}_1,\ldots,\mathbf{R}_k$, for $k\in[n]$. During every round, each source $\mathbf{R}_i$, $i\in [k]$, generates a single bit whose value is chosen uniformly at random in $\{0,1\}$. The random variable equal to the infinite binary string generated by $\mathbf{R}_i$ is denoted by~$R_i$.
Each node is connected to one of the $k$ sources $\mathbf{R}_i$, $i\in [k]$, and it may be the case that several nodes are connected to the same source of randomness (this necessarily  happens whenever $k<n$). The random variable equal to the infinite binary string received by node~$i\in[n]$, is denoted by~$X_i$. 
At time~$t$, node~$i$ has received a prefix of length~$t$ of $X_i$, i.e., a $t$-bit string $x_i(t)\in\{0,1\}^t$.

Throughout the paper, random variables are denoted by uppercase letters (e.g., $R_i, X_i$), and their realizations are denoted by lowercase letters (e.g., $r_i, x_i$). A random variable $Z$ at round~$t$ is denoted by $Z(t)$, and for a string~$S$, we let $S(t,\dots,t')$ denote the sub-string $S(t)\, S(t+1) \dots S(t')$; the same holds for realizations of random variables, etc. 

\subsection{Knowledge}

For $t\geq 0$, let $K_i(t)$ be the \emph{knowledge} of node~$i\in[n]$ at time~$t$, defined recursively as follows. For every $i\in [n]$, $K_i(0) = v_i$, where $v_i$ is the input value given to node~$i$. If there are no inputs, then $K_i(0) = \bot$. In the blackboard model, for $t\geq 1$, we set
\begin{equation}\label{eqn:knowledgeBB}
K_i(t) = \Big (K_i(t-1), X_i(t), \big\{K_j(t-1) : j\in [n]\smallsetminus\{i\}\big\} \Big).
\end{equation}
where the knowledge $\{K_j(t-1) : j\in [n]\smallsetminus\{i\}\}$ received from the other nodes is a multi-set; this multiset corresponds to the entire content of the blackboard, up to the order which is lexicographic by assumption.
In the message-passing model, we set
\begin{equation}\label{eqn:knowledgeMP}
K_i(t) = \Big(K_i(t-1), X_i(t), \big ( K_{\pi_i(1)}(t-1),\dots, K_{\pi_i(n-1)}(t-1) \big) \Big),
\end{equation}
where $\pi_i(j)\in[n]$ denotes the node connected to node~$i$ by the edge with port-number~$j$ at~$i$. Note that node~$i$ does not know~$i$, nor does it know~$\pi_i(j)$ for $j=1,\dots,n-1$. Note also that, at time~$t$, a node knows the $t$ random bits it received from its source of randomness during the first $t$ rounds, but only the $t-1$ bits received by every other node from its source of randomness during the first $t-1$ rounds.

\section{Topological Description of Randomized Symmetry-Breaking Distributed Algorithms}
\label{sec:topology}

In this section, we describe the topological framework that enables the analysis of distributed algorithms, and extends it to capture the analysis of randomized algorithms. In Section~\ref{sec:le} we will later show how to actually use this framework for analyzing the solvability of leader election as a function of the randomness given to each node, for both blackboard and message-passing models. The reader is referred to Appendix~\ref{app:topology101} for some basic topological definitions. Further information can be found, e.g., in~\cite{Book-topology}.

\subsection{Topological Setting} 

We recall the notion of tasks, and of solvability of tasks in fixed time, within the topological framework. 

\paragraph{Tasks.}

A \emph{task} is described by a triple $\Pi=(\cI,\cO,\Delta)$, where $\cI$ is the input complex, $\cO$ is the output complex, and $\Delta:\cI\to 2^{\cO}$ is the input-output specification of the task (see, e.g., \cite{Book-topology}). All the complexes are ``colored'', in the sense that their vertices have the form $(i,x)$ with color $i\in[n]$, for some value~$x$, and their simplices include vertices with different colors. We rather refer to the color~$i$ of a vertex $(i,x)$ as its \emph{name},  i.e., the node named~$i$ holds value~$x$; we denote $\mathsf{name}((i,x))=i$, which can be extended to a set of nodes in a straightforward manner.

In this paper, we focus on input-free symmetry breaking tasks, so $\cI$ is the trivial complex with a single facet $\{(i,\bot):i\in [n]\}$. For input-free tasks, the input-output specification is trivial, that is, given any input simplex $\sigma\in\cI$, this simplex is mapped to all simplices of~$\cO$ with same set of names as~$\sigma$.  
A symmetry-breaking task is thus simply defined by its output complex~$\cO$.
We only require that the output complex must be stable by permutation of the colors of the processes. That is, if $\{(i,v_i):i\in I\}$ is a simplex of~$\cO$, with $\varnothing\neq I\subseteq [n]$, then, for every permutation $\pi:I\to I$, it must be the case that $\{(i,v_{\pi(i)}):i\in I\}\in \cO$. 
For instance, the output complex $\cO_{\mathsf{LE}}$ for leader election (\textsf{LE}) has $n$ facets 
\[
\tau_i=\{(1,0),\dots,(i-1,0),(i,1),(i+1,0),\dots,(n,0)\},
\]
for $i=1,\dots,n$. That is, $\tau_i$ is the legal output state in which node~$i$ is elected, and the $n-1$ other nodes are defeated. 
Note that $\cO_{\mathsf{LE}}$ is symmetric.

\paragraph{Communication and randomness configuration.}

The exchanges of information between the nodes occurring throughout the execution is captured by the protocol complexes. The vertices of the protocol complex at time~$t$, denoted by $\cP(t)$, are pairs $(i,K_i(t))$, $i\in [n]$, where $K_i(t)$ denotes the knowledge acquired by node~$i$ at time~$t$. 
See Figure~\ref{fig:Pwith2parties} for a demonstration of $\cP(t)$ for $t=0,1,2$ for a computation with two parties.
\begin{figure}[htp]
    \centering


\begin{tikzpicture}%
[pred/.style={circle,draw=red!80,fill=red!80,thick,
 pattern={custom north west lines},hatchcolor=red!80,	
inner sep=0pt,minimum size=2mm},
pblack/.style={circle,draw=black,fill=black!80,thick,
inner sep=0pt,minimum size=2mm},
pwhite/.style={circle,draw=black,fill=white,thick,
inner sep=0pt,minimum size=2mm},
alabel/.style={above,inner sep=5pt},
blabel/.style={below,inner sep=5pt}];

\draw (-3,3.5)  node  [pwhite] {} node [alabel] {$\bot$} --  (-1,3.5) node [pblack] {}  node [alabel] {$\bot$};

\draw (4,4)  node  [pwhite] {} node [alabel] {$\bot,0,(\bot)$} --  (6,4) node [pblack] {} node [alabel] {$\bot,0,(\bot)$} --
 (6,3) node [pwhite] {} node [blabel] {$\bot,1,(\bot)$} --  (4,3)  node  [pblack] {} node [blabel] {$\bot,1,(\bot)$} -- cycle;

\begin{scope}[yshift=-3cm]
	\draw (0,2)  node  [pwhite] {} node [alabel] {$k_0,0,(k_0)$} --  (2,2) node [pblack] {} node [alabel] {$k_0,0,(k_0)$} --  (2,1) node [pwhite] {} node [blabel] {$k_0,1,(k_0)$} --  (0,1)  node  [pblack] {} node [blabel] {$k_0,1,(k_0)$} -- cycle;
\end{scope}
\begin{scope}[yshift=-8cm]
	\draw (0,2)  node  [pwhite] {} node [alabel] {$k_1,0,(k_1)$} --  (2,2) node [pblack] {} node [alabel] {$k_1,0,(k_1)$} --  (2,1) node [pwhite] {} node [blabel] {$k_1,1,(k_1)$} --  (0,1)  node  [pblack] {} node [blabel] {$k_1,1,(k_1)$} -- cycle;
\end{scope}

\begin{scope}[yshift=-5.5cm, xshift=-4cm]
	\draw (0,2)  node  [pwhite] {} node [alabel] {$k_0,0,(k_1)$} --  (2,2) node [pblack] {} node [alabel] {$k_1,0,(k_0)$} --  (2,1) node [pwhite] {} node [blabel] {$k_0,1,(k_1)$} --  (0,1)  node  [pblack] {} node [blabel] {$k_1,1,(k_0)$} -- cycle;
\end{scope}
\begin{scope}[yshift=-5.5cm,xshift=4cm]
	\draw (0,2)  node  [pwhite] {} node [alabel] {$k_1,0,(k_0)$} --  (2,2) node [pblack] {} node [alabel] {$k_0,0,(k_1)$} --  (2,1) node [pwhite] {} node [blabel] {$k_1,1,(k_0)$} --  (0,1)  node  [pblack] {} node [blabel] {$k_0,1,(k_1)$} -- cycle;
\end{scope}

\draw [dashed] (-6,0.75) -- (8.75,0.75);
\draw [dashed] (1.5, 5.25) -- (1.5, 1.5);
\draw [densely dotted,-latex] (4.75,3) to [bend left=15]  (1.5,-5.1);
\draw [densely dotted,-latex] (6,3.5) to [out=0,in=70]  (5,-2.75);

\node at (-3.5, 5) {$\mathcal{P}(0)$};
\node at (7, 5) {$\mathcal{P}(1)$};
\node at (-3.5, -1) {$\mathcal{P}(2)$};

\end{tikzpicture}
    \caption{\small The evolution of a 2-party algorithm for time steps $t=0,1,2$. 
    The knowledge of each party at a given time, written next to the respective node, consists of the party's previous knowledge, the random bit it has achieved in that round, and the knowledge of the other party sent to it in the previous round. In this figure, $k_0 = (\bot,0,(\bot))$, $k_1 = (\bot,1,(\bot))$. Each edge is a possible state of the system, whose probability is determined by the specific randomness configuration~$\alpha \in \cA$ in a given execution. An edge at time~$t$ (i.e., a facet of~$\sigma\in \cP(t)$)  evolves into $4$ possible facets (edges) of~$\cP(t+1)$. These correspond to the 4 possible values of the random bits obtained by the two parties at time~$t+1$.}
    \label{fig:Pwith2parties}
\end{figure}
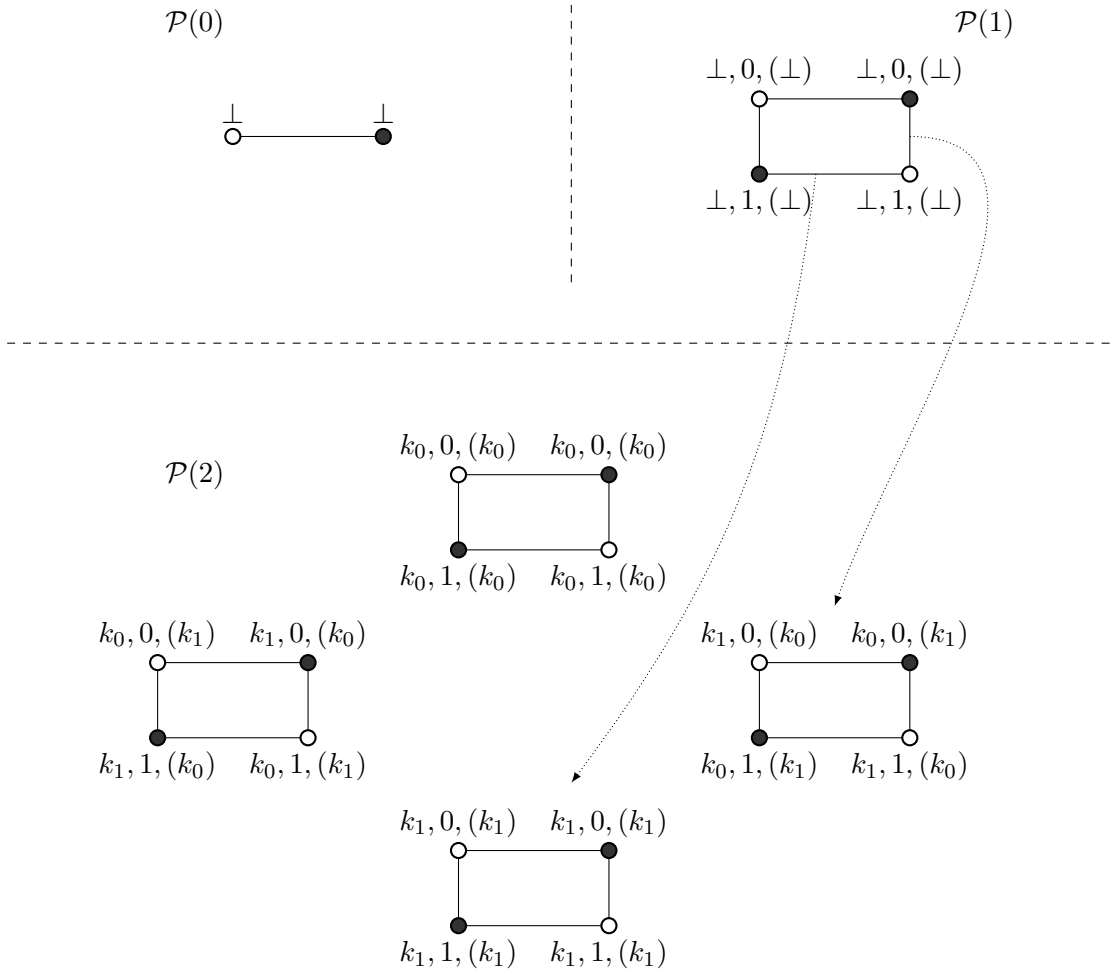
The knowledge acquired by the nodes however depends the randomness they obtain and on the way the $k$~randomness sources are assigned to the $n$~nodes. We also use a complex to formalize all possible assignments. This complex, denoted by~$\cA$ (for ``assignment'') is the pure $(n-1)$-dimensional complex whose facets are of the form 
\[
\alpha=\{(1,j_1), \ldots, (n,j_n)\},
\]
where 
$\bigcup_{i=1}^n \{j_i\} =[k]$ for some $k\in [n]$; that is, without loss of generality we rename the $k$ different sources to be contiguous in $\{1,\ldots, k\}$.
For every $i\in [n]$, a pair $(i,j)$ means that node~$i$ is connected to the randomness source~$\mathbf{R}_{j}$. Every facet~$\alpha$ of~$\cA$ is called a randomness-\emph{configuration}, that is, a configuration determines which node is connected to which randomness source, for all the nodes. 
For a given configuration~$\alpha$ we denote by $k=k(\alpha)$ the number of different sources actually connected to the systems. Note that our restriction on~$\cA$'s facets means that the $k$ sources actually connected to the system are exactly  $\mathbf{R}_1,\ldots,\mathbf{R}_k$.
A set 
\[
\{(i,K_i(t)):i\in [n]\}
\]
forms a facet of $\cP(t)$ whenever there exists a configuration $\alpha\in\cA$ such that, with non-zero probability, each node~$i$ acquires knowledge $K_i(t)$, $i=1,\dots,n$, after $t$ rounds of communication.

\paragraph{Solvability in fixed time.}

Recall that a map $\delta$ between the vertex sets of two complexes is \emph{simplicial} if it preserves simplices. A simplicial map between two chromatic complexes is \emph{name-preserving} if it preserves the names of the vertices (i.e., for every vertex $(i,x)$, $\delta(i,x)=(i,y)$ for some~$y$ that may depend on~$i$ and~$x$), and it is \emph{name-independent} if it is oblivious to the names (i.e., if $\delta(i,x)=(i,y)$, then $\delta(j,x)=(j,y)$ for every $j$, that is, $y$ depends solely on~$x$).  
In this work all our complexes are chromatic and all the maps are name-preserving.

In the standard topological setting, a task $(\cI,\cO,\Delta)$ is solvable in $t$ rounds if there exists a name-preserving and name-independent simplicial map 
\[
\delta:\cP(t) \to \cO. 
\]
This notion of solvability is not appropriate to our randomized setting, for two reasons. First, we want to discuss solvability as a function of the randomness-configuration $\alpha\in\cA$ of the randomness sources. Second, and more importantly, there might be no~$t$, even arbitrarily large, enabling such a simplicial map $\delta:\cP(t) \to\cO$ to exist. 
This holds even when the task is eventually solvable under the configuration~$\alpha$.
To better illustrate this point, assume two processes, each with its private and independent source of randomness. There is no~$t$ for which one can guarantee that the two processes have received two different bits at some round $r\in\{1,\dots,t\}$. Yet, leader election is almost surely solvable in this context as, eventually, the two processes will receive two different bits at some round. 

\subsection{Eventual Solvability}

A global state of the system at time~$t$ is a facet $\sigma$ of~$\cP(t)$.

\begin{definition}\label{def:solve_P}
A global state~$\sigma \in \cP(t)$ solves a symmetry-breaking task~$\cO$ if there exists $\tau\in\cO$ and a \emph{name-preserving} and \emph{name-independent} simplicial map 
$
\delta: \sigma\to \tau.
$
\end{definition} 

This definition of solvability for a facet of $\cP(t)$ is motivated by the following observation. Let us assume that, for a facet $\sigma=\{(i,K_i(t)):i\in[n]\}$ of~$\cP(t)$, there is a name-preserving and name-independent simplicial map $\delta: \sigma\to \tau$. This map can be written as $\delta(i,K_i(t))=(i,f(K_i(t))$, $i=1,\dots,n$, for some function~$f$ of the knowledge. Since the output complex $\cO$ is symmetric, the map~$\delta$ yields the existence of a  simplicial map $\delta': \sigma'\to \tau'$ for every simplex $\sigma'=\{(i,K_{\pi(i)}(t)):i\in[n]\}$ of~$\cP(t)$ where $\pi:[n]\to [n]$ is a permutation, letting
$\tau'=\{(i,v_{\pi(i)}) : (i,v_i)\in\tau \} \in\cO$ and defining
$\delta'(i,K_{\pi(i)}(t))=(i,f(K_{\pi(i)}(t)))$. 
From this observation, we derive an algorithm solving $\cO$ whenever the global state of the system is of the form $\sigma'=\{(i,K_{\pi(i)}(t)):i\in[n]\}$ for some permutation~$\pi:[n]\to[n]$. Indeed, at time~$t$, the knowledge accumulated by nodes during the first $t$ rounds results in some global state $\sigma'\in \cP(t)$ of the system. Each node may not be aware of $\sigma'$ as its individual knowledge may also be compatible with other global states. Nevertheless, after one more round, the nodes receives the knowledge of the other nodes in $\sigma'$. This enables each node to reconstruct $\sigma'$ up to a permutation of the names of the other nodes. By applying $f$ on its knowledge, every node can then compute its output such that the collection of all outputs truly solves the task at time $t+1$ whenever the processes were in global state~$\sigma'$ at time~$t$. 

Given an assignment $\alpha\in \cA$ of the randomness sources to the nodes, every global state $\sigma \in \cP(t)$ has some probability to occur at a given time. One can thus compute the probability of solving the task at time~$t$ given $\alpha$ as
\[%
\Pr[\mbox{$\cP(t)$ solves $\cO$} \mid \alpha] = \sum_{\mbox{\small $\sigma$ solves $\cO$}} \Pr[\sigma \mid \alpha].
\]%
Observe that whenever $\sigma\in \cP(t)$ solves~$\cO$ via some $\delta_{\sigma}:\sigma\to\tau$, every global state $\sigma'\in\cP(t')$ for $t'\geq t$ that results from $\sigma$ after $t'-t$ additional rounds also solves~$\cO$. This is simply because the knowledge is cumulative, and one can discard all the additional information obtained by the nodes during the $t'-t$ additional rounds for defining $\delta_{\sigma'}:\sigma'\to \tau$ using~$\delta_{\sigma}$. More importantly, the following holds. 

\begin{lemma}
For every input-free symmetry-breaking task $\cO$, and every randomness-configuration~$\alpha\in\cA$, 
\[
\lim_{t\to \infty} \Pr[\mbox{$\cP(t)$ solves $\cO$} \mid \alpha]\in\{0,1\}.
\]
\end{lemma}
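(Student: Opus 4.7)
The plan is to express $\lim_{t\to\infty}\Pr[\cP(t)\text{ solves }\cO\mid \alpha]$ as the probability of an event that is almost surely constant given $\alpha$, and hence takes value in $\{0,1\}$. Let $A_t$ denote the event that the realized global state $\sigma\in\cP(t)$ solves $\cO$, and $A=\bigcup_t A_t$. The paragraph immediately preceding the lemma already records that if $\delta:\sigma(t)\to\tau$ is a name-preserving and name-independent solving map, then the recursive definitions~\eqref{eqn:knowledgeBB}--\eqref{eqn:knowledgeMP} let $\delta$ lift to a solving map on every descendant $\sigma(t')$, $t'\geq t$; so $A_t\subseteq A_{t+1}$ and $\lim_t \Pr[A_t\mid\alpha]=\Pr[A\mid\alpha]$.

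Next I would reduce solvability to a combinatorial condition on a partition. Define the knowledge-partition $\Pi(t)$ of~$[n]$ by $i\sim_t j$ iff $K_i(t)=K_j(t)$. Because knowledge is cumulative, once $K_i(t)\neq K_j(t)$ they differ forever, so $\Pi(t)$ only refines in $t$; since $[n]$ has finitely many partitions, $\Pi(t)$ almost surely stabilizes at a limit partition $\Pi(\infty)$. A name-preserving simplicial map $\pi(\sigma(t))\to\pi(\tau)$ exists precisely when $\Pi(t)$ refines the output partition of $\tau$ (the partition of $[n]$ that groups nodes with the same output value in~$\tau$); because $\cO$ is symmetric, this is a purely combinatorial condition on $\Pi(t)$, and it is monotone under refinement. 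Hence $A$ occurs iff $\Pi(\infty)$ refines the output partition of some $\tau\in\cO$.

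The core step is to show that $\Pi(\infty)$ is almost surely a deterministic partition $P_\alpha$. Fix $i,j\in[n]$. If $i$ and $j$ are assigned the same source under $\alpha$, then $X_i=X_j$ and $i\sim_t j$ for every $t$, giving $\Pr[i\sim_\infty j\mid\alpha]=1$ trivially. Otherwise the per-round updates $U_t=(R_1(t),\dots,R_k(t))$ are i.i.d., and $\Pr[R_i(t)\neq R_j(t)]=\tfrac12$ independently across $t$; Borel--Cantelli's second lemma ensures $R_i(t)\neq R_j(t)$ for infinitely many $t$ almost surely, and in the blackboard model this immediately and permanently separates $i$ and $j$ in $\Pi(\cdot)$ by the cumulative structure of knowledge. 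Kolmogorov's 0--1 law applied to the tail $\sigma$-algebra of the i.i.d.\ round-updates $(U_t)_t$ then yields $\Pr[i\sim_\infty j\mid\alpha]\in\{0,1\}$ for every pair. Intersecting over all pairs, $\Pi(\infty)=P_\alpha$ almost surely, and combining with the previous step gives $\Pr[A\mid\alpha]=\mathbf{1}\{P_\alpha\text{ refines some output partition of }\cO\}\in\{0,1\}$.

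The main obstacle is the message-passing case, where divergence of raw random bits need not immediately propagate to divergence of knowledge: two nodes whose personal bits differ can still accumulate identical views if their port-indexed neighborhoods are coincidentally aligned. The work is therefore in showing that, once the (adversarial) port-number assignment is fixed, the pairwise relation $i\sim_\infty j$ remains a.s.\ deterministic, so that $\Pi(\infty)=P_\alpha$ still holds almost surely. This structural step is closely related to, and reused by, the later analysis leading to the $\gcd$ criterion; with it in hand, the 0--1 dichotomy stated in the lemma follows immediately from the argument above.
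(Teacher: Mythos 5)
There is a genuine gap, and it is precisely the step you flag at the end. Your strategy is to show that the knowledge partition $\Pi(t)$ stabilizes to an almost-surely \emph{deterministic} partition $P_\alpha$ and to read off solvability from $P_\alpha$; but the tool you invoke for the key step does not apply, and the hard case is left open. The event $\{i\sim_\infty j\}$ is \emph{not} a tail event of the round updates $(U_t)_t$: an inconsistency, once it occurs, is permanent, so whether $i$ and $j$ are ever separated depends on every finite prefix of the randomness (in the blackboard model it can be decided by $U_1$ alone). Kolmogorov's zero--one law therefore gives you nothing for $\Pr[i\sim_\infty j\mid\alpha]$. What does work is a restart argument: if separation has positive probability within some window of length $t$, then because the sources are i.i.d.\ across disjoint windows and knowledge is cumulative (separation is absorbing), it occurs almost surely. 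You supply exactly this (via Borel--Cantelli II) for the blackboard model, but for message passing you only assert that the needed structural fact ``is closely related to, and reused by, the later analysis'' --- that is the entire content of the lemma in that model, so the proof is incomplete as written.

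The paper avoids characterizing $\Pi(\infty)$ altogether, and its argument is both simpler and uniform over the two models. It works directly at the level of the whole global state: if $\Pr[\cP(t)\text{ solves }\cO\mid\alpha]>0$ for some $t$, there is a length-$t$ realization $R$ of the $k$ sources, of positive probability, whose occurrence forces a solving state; letting $E_s$ be the event that $R$ occurs in rounds $s+1,\dots,s+t$, the events $E_0,E_t,E_{2t},\dots$ are i.i.d.\ with positive probability, and cumulativeness of knowledge ensures each $E_{st}$ still yields a solving state at time $st+t$ (the knowledge partition at that time refines the one induced by $R$ alone, and solvability is monotone under refinement --- a fact you yourself establish). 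Hence $\bigcup_s E_{st}$ occurs almost surely and the limit is $1$; otherwise the probability is identically $0$. If you want to salvage your route, replace the zero--one-law step by this restart argument, applied either pairwise or directly to the global solving event; that also disposes of the message-passing case without any analysis of port numbers.
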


\begin{proof}
Let us assume that $\Pr[\cP(t) \; \mbox{solves} \; \cO \mid \alpha]>0$. Therefore, there exists a global state  $\sigma\in \cP(t)$ that solves $\cO$, where $\Pr[\sigma \mid \alpha]>0$. 
This means that there exists a set of nodes' knowledge $K = \{K_i(t)\}_{i\in [n]}$ that yields a solution to the task. 
Note that in an input-free task, knowledge (at time~$t$) stems only from the randomness and messages sent by time $t$. That is, there exists $k$ realizations of randomness, each of length~$t$, that induce the set $ K$, and these realizations have non-zero probability to occur, given~$\alpha$. 
Denote these realizations by~$R$. 
For every $s\geq 0$, let $E_s$ be the event ``$R$ occurred during rounds $s+1$ to $s+t$''. Since knowledge is cumulative, the occurrence of~$E_s$ implies that, at time $s+t$, the nodes hold a knowledge $\{K_i(s+t)\}_{i\in [n]}$ where, for every $i\in[n]$, $K_i(t)$ is included in $K_i(s+t)$. 
Thus, for any $s\geq 0$, if $E_s$ holds, the system reaches a global state that solves the task.
Furthermore, note that two events $E_s$ and $E_{s'}$ are independent whenever $|s-s'|>t$, since our sources are i.i.d across time.

Recall that, given an infinite sequence $(X_i)_{i\geq 1}$ of independent random variables, a \emph{tail event} for $(X_i)_{i\geq 1}$ is an event based on the realization of the $X_i$'s, $i\geq 1$, which is  probabilistically independent of any \emph{finite} subset of $\{X_i:i\geq 1\}$. Kolmogorov's zero--one law~\cite{HandbookOfProb2014Ch2} states that, for any tail event~$E$ over $(X_i)_{i\geq 1}$, either
\begin{align*}
\Pr[E]=0,\quad \text{or}\quad \Pr[E]=1.
\end{align*}
Let 
$
E=\bigcup_{s=0}^\infty E_{s t}.
$
We have 
$
\Pr [E\mid \alpha ] \le \lim_{t\to \infty} \Pr[\mbox{$\cP(t)$ solves $\cO$}\mid \alpha].
$
Moreover, $E$ is a tail event, and therefore its probability is either~0 or~1. Since $\Pr[E_0\mid \alpha]>0$, it follows that  $\Pr[E\mid \alpha]=1$. 
We conclude that 
if $\Pr[\cP(t)$\text{ solves }$\cO \mid \alpha]>0$, then 
$
\lim_{t\to \infty} \Pr[\mbox{$\cP(t)$ solves $\cO\mid \alpha$}]=1,
$
as claimed. 
 \end{proof}

This result motivates the following definition. 

\begin{definition}\label{def:solvabilityAS}
A task $\cO$ is  \emph{eventually solvable} given the randomness-configuration~$\alpha\in\cA$ if and only if
\[
\lim_{t\to \infty} \Pr[\mbox{$\cP(t)$ solves $\cO$} \mid \alpha]=1.
\]
\end{definition}

\subsection{Realization and Consistency Complexes}

We now introduce new complexes, which are essentially reformulations of the  protocol and output complexes~$\cP(t)$ and~$\cO$, more suitable for the analysis of input-free symmetry-breaking tasks.  

\paragraph{Realization complex.} 

In the setting of this paper, namely the anonymous blackboard and message-passing models, the protocol complex is entirely determined by the values of the random bits produced by the $k$ sources of randomness. The realization complex at time~$t$ is denoted by $\cR(t)$, for any $t\geq 1$. The vertices  of $\cR(t)$ are pairs $(i,x_i)$ where $x_i\in\{0,1\}^t$ is a binary string of length~$t$ (the random bits received by node~$i$ during the first~$t$ rounds). Specifically, $\cR(t)$ has vertex-set 
\[
V(\cR(t)) = \{(i,x_i) : i\in [n] ,  x_i \in \{0,1\}^t\}. 
\]
For $I\subseteq [n]$,  a set $\rho = \{(i,x_i) :  i\in I\}\subseteq V(\cR(t))$ is a simplex of $\cR(t)$ if there exists a randomness-configuration~$\alpha\in\cA$ such that, with non-zero probability, each node $i\in I$ may receive the random bit-string $x_i$. 
See Figure~\ref{fig:tikz_R} for an illustration with three parties.
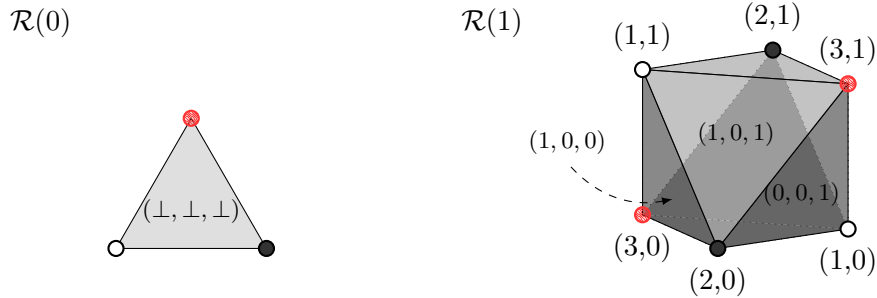
\begin{figure}[htp]
    \centering


\usetikzlibrary{3d,perspective}


\begin{tikzpicture}%
[pred/.style={circle,draw=red!80,fill=red!80,thick,
 pattern={custom north west lines},hatchcolor=red!80,	
inner sep=0pt,minimum size=2mm},
pblack/.style={circle,draw=black,fill=black!80,thick,
inner sep=0pt,minimum size=2mm},
pwhite/.style={circle,draw=black,fill=white,thick,
inner sep=0pt,minimum size=2mm},
alabel/.style={above,inner sep=5pt},
blabel/.style={below,inner sep=5pt}];


\node at (-1,3)  {$\mathcal{R}(0)$};
\node at (5,3)  {$\mathcal{R}(1)$};

\coordinate (n1) at (0,0)  ;
\coordinate (n2) at (2,0)  ;
\coordinate (n3) at (60:2); 

\filldraw [draw=black,fill=black!50,fill opacity=0.25]   (n1)  -- (n2) -- (n3) -- cycle;
{\footnotesize
\node at (barycentric cs:n1=1,n2=1,n3=0.8)  {$(\bot,\bot,\bot)$};

\node at (n1)  [pwhite] {}  ;
\node at (n2) [pblack] {} ;
\node at (n3)  [pred] {} ; 
}

\begin{scope}[xshift=8cm,3d view]

\coordinate (B0) at (0,0,0) ;
\coordinate (W0) at (2,0,0);
\coordinate (R0) at (0,2,0);	
\coordinate (B1) at (2,2,2);
\coordinate (W1) at (0,2,2) ;
\coordinate (R1) at (2,0,2);

\filldraw [draw=black,fill=black!70,fill opacity=1,dotted] (B0) -- (R0) -- (W0) -- cycle;
\filldraw [draw=black,fill=black!70,fill opacity=1,dotted] (B1) -- (R0) -- (W0) -- cycle;

\filldraw [draw=black,fill=black!50,fill opacity=0.85,dotted] (B1) -- (R0) -- (W1) -- cycle;
\filldraw [draw=black,fill=black!50,fill opacity=0.85,dotted] (B1) -- (R1) -- (W0) -- cycle;

\filldraw [draw=black,fill=black!50,fill opacity=0.5] (B0) -- (R1) -- (W0) -- cycle;
\filldraw [draw=black,fill=black!50,fill opacity=0.5] (B0) -- (R0) -- (W1) -- cycle;

\filldraw [draw=black,fill=white!70,fill opacity=0.45] (B1) -- (R1) -- (W1) -- cycle;
\filldraw [draw=black,fill=white!70,fill opacity=0.45] (B0) -- (R1) -- (W1) -- cycle;

\node at (B0) [pblack,label=below:{(2,0)}] {} ;
\node at (W0) [pwhite,label=below:{(1,0)}] {};
\node at (R0) [pred,label=below:{(3,0)}] {};
\node at (B1) [pblack,label=above:{(2,1)}] {}  ;
\node at (W1) [pwhite,label=above:{(1,1)}] {}  ;
\node at (R1) [pred,label=above:{(3,1)}] {}  ;

{\footnotesize
\node at (barycentric cs:B0=1,W1=1,R1=1)  {$(1,0,1)$};
\node at (barycentric cs:B0=1,W0=1,R1=0.8)  {$(0,0,1)$};

\draw [latex-,dashed,bend left] (barycentric cs:B0=1,W1=0.5,R0=1) to (0,4,0.25) node [above] {$(1,0,0)$};
}

\end{scope}

\end{tikzpicture}
    \caption{\small A demonstration of $\cR(0)$ and $\cR(1)$ in a system with 3 processes. Each facet represents a possible state of the system described via the randomness received by the parties up to that time. The notation $(w,b,r)$ with $w,b,r\in\{0,1\}$ describes the randomness of the white, black, and red nodes accordingly, i.e., the simplex $\{(1,w), (2,b), (3,r)\}$; $\bot$ is the empty string.}
    \label{fig:tikz_R}
\end{figure}

We observe relations between~$\cP(t)$ and~$\cR(t)$ that allow us to analyze the algorithm via the more intuitive~$\cR(t)$. For any time $t\ge 0$ there exists a simplicial map $h: \cP(t) \to \cR(t)$ that takes each vertex $(i,K_i)\in \cP(t)$ to $(i,x_i)\in \cR(t)$, where $x_i\in\{0,1\}^t$ is the randomness received by party~$i$ according to $K_i(t)$; recall that $K_i(t)$ indeed contains $x_i(t)$ by its definition (Eqs.~\eqref{eqn:knowledgeBB} and~\eqref{eqn:knowledgeMP} for the blackboard and message-passing models, respectively). Note that $K_i(t)$ contains also randomness received by all the other parties up to round $t-1$, hence, $h$ maps multiple vertices to~$(i,x_i)$. Note that $h$ is name-preserving by construction.

\label{sec:isomorphism}
We observe that the simplicial map~$h$ induces an isomorphism between \emph{facets} of $\cP(t)$ and facets of~$\cR(t)$. Indeed, a facet $\{(i,K_i(t)):i\in [n]\} \in \cP(t)$ uniquely determines the randomness $(x_1,\ldots, x_n)$ received by all parties up to round~$t$, and is mapped to the facet $\{ (i,x_i) : i\in [n]\}$ of~$\cR(t)$. Similarly, if one determines the randomness by time~$t$, $(x_1,\ldots, x_n)$, this uniquely defines the knowledge every party holds up to time~$t$, since each $K_i(t)$ consists of $x_i$ and $K_j(t-1)$ for $j\in[n]$, and these, by induction, are deterministic function of $(x_1,\ldots, x_n)$.
With a slight abuse of notation we will commonly refer to~$h$ as an \emph{isomorphism}, implicitly restricting it to act on facets.

\paragraph{Consistency complexes.} 

We now consider two general ``consistency-projections'' $\pi$ and $\tilde{\pi}$ that apply on chromatic complexes. Let $\cK$ be a pure chromatic complex of dimension $n-1$, that is, a complex whose vertices are pairs of the form $(i,v)$, with $i\in[n]$, and $v$ a value. 
Let $\sigma=\{(i,v_i): i\in [n]\}$ be a facet of~$\cK$. We define the complex $\pi(\sigma)$ as the complex on vertex-set $\{(i,v_i) :  i\in [n]\}$ such that, for every non-empty $I\subseteq[n]$, 
\begin{equation}\label{eqn:pi}
\{(i,v_i) :  i\in I\} \in  \pi(\sigma) \iff \forall (i,j)\in I\times I, v_i=v_j.
\end{equation}
The projection $\pi$ applied simultaneously to all the facets of $\cK$ results in the complex
\[
\pi(\cK)=\bigcup_{\sigma\in\cK}\pi(\sigma), 
\]
where the union is taken on the facets of~$\cK$. 
We note that $\pi(\cK)$ is a subcomplex of~$\cK$.

As a simple illustrative example, in the case of leader election (\textsf{LE}), the complex $\pi(\cO_{\mathsf{LE}})$ has facets
\begin{align*}
\{(i,1)\} \quad \text{and} \quad \{(j,0):j\in [n]\smallsetminus \{i\}\}
\end{align*}
for every $i\in[n]$. See also Figure~\ref{fig:OLE}.

\begin{figure}[htp]
    \centering

\usetikzlibrary {intersections,patterns}

\begin{tikzpicture}%
[pred/.style={circle,draw=red!80,fill=red!80,thick,
 pattern={custom north west lines},hatchcolor=red!80,	
inner sep=0pt,minimum size=2mm},
pblack/.style={circle,draw=black,fill=black!80,thick,
inner sep=0pt,minimum size=2mm},
pwhite/.style={circle,draw=black,fill=white,thick,
inner sep=0pt,minimum size=2mm}];


\coordinate (n10) at (0,0)  ;
\coordinate (n20) at (1,0)  ;
\coordinate (n30) at (60:1); 
\path (n20) ++ (60:1) coordinate (n11);
\coordinate (n21) at (120:1);
\coordinate (n31) at (-60:1) ;

\filldraw [draw=black,fill=black!50,fill opacity=0.25]   (n10)  -- (n21) -- (n30) -- cycle;
\filldraw [draw=black,fill=black!50,fill opacity=0.25]   (n20)  -- (n30) -- (n11) -- cycle;
\filldraw [draw=black,fill=black!50,fill opacity=0.25]   (n10)  -- (n20) -- (n31) -- cycle ;

\node at (n10)  [pwhite] {}  ;
\node at (n20) [pblack] {} ;
\node at (n30)  [pred] {} ; 
\node at (n11) [pwhite]  {};
\node at (n21) [pblack] {};
\node at (n31)  [pred] {};

{\footnotesize
\node at (barycentric cs:n10=1,n21=1,n30=1)  {$\tau_2$};
\node at (barycentric cs:n20=0.5,n11=0.5,n30=0.5)  {$\tau_1$};
\node at (barycentric cs:n10=0.5,n20=0.5,n31=0.5)  {$\tau_3$};
}

\node at (-1.5,-1.5) {$\mathcal{O}_{\mathsf{LE}}$};
{\footnotesize
\node at (n10)  [left,inner sep=5pt] {$(1,0)$}  ;
\node at (n20) [right,inner sep=5pt] {$(2,0)$} ;
\node at (n30)  [above,inner sep=5pt] {$(3,0)$} ; 
\node at (n11) [above,inner sep=5pt]  {$(1,1)$};
\node at (n21) [above,inner sep=5pt] {$(2,1)$};
\node at (n31)  [below,inner sep=5pt] {$(3,1)$};
}
\draw [rounded corners,ultra thick,black] (-2.3,-2) rectangle (2.5,2.5);
\end{tikzpicture}
\qquad\quad
\begin{tikzpicture}%
[pred/.style={circle,draw=red!80,fill=red!80,thick,
 pattern={custom north west lines},hatchcolor=red!80,	
inner sep=0pt,minimum size=2mm},
pblack/.style={circle,draw=black,fill=black!80,thick,
inner sep=0pt,minimum size=2mm},
pwhite/.style={circle,draw=black,fill=white,thick,
inner sep=0pt,minimum size=2mm}];


\coordinate (n10) at (0,0)  ;
\coordinate (n20) at (1,0)  ;
\coordinate (n30) at (60:1); 
\path (n20) ++ (60:1) coordinate (n11);
\coordinate (n21) at (120:1);
\coordinate (n31) at (-60:1) ;


\draw[black] (n30) -- (n10) -- (n20) -- cycle;

\node at (n10)  [pwhite] {}  ;
\node at (n20) [pblack] {} ;
\node at (n30)  [pred] {} ; 
\node at (n11) [pwhite]  {};
\node at (n21) [pblack] {};
\node at (n31)  [pred] {};


\node at (-1.5,-1.5) {$\pi(\mathcal{O}_{\mathsf{LE}})$};

{\footnotesize
\node at (n10)  [below left,inner sep=5pt] {$(1,0)$}  ;
\node at (n20) [below right,inner sep=5pt] {$(2,0)$} ;
\node at (n30)  [above,inner sep=10pt] {$(3,0)$} ; 
\node at (n11) [above right,inner sep=5pt]  {$(1,1)$};
\node at (n21) [above left,inner sep=5pt] {$(2,1)$};
\node at (n31)  [below,inner sep=8pt] {$(3,1)$};
}

\draw [rounded corners,ultra thick,black,use as bounding box] (-2.5,-2) rectangle (3,2.5);


\draw[name path=t1edge,thick,draw=red!60,dashed,opacity=0.5] (30:0.866) circle [x radius=0.8cm, y radius=3mm, rotate=-60]; 
\draw[name path=t1node,thick,draw=red!60,dashed,opacity=0.5] (n11) circle [radius=3mm];

\draw (-5.15,0.75) edge[bend left,very thick,draw=red!60,dashed,-stealth,shorten >=10pt] 
	node [pos=0.35,above,red!60]  {$\pi(\tau_1)$} 
	coordinate [pos=0.65] (cSplit)
	(n30);
\draw (cSplit) edge[bend left=45,very thick,draw=red!60,dashed,-stealth,shorten >=10pt,shorten <=5pt] 
	(n11);


\end{tikzpicture}
    \caption{\small $\cO_{\mathsf{LE}}$ and $\pi(\cO_{\mathsf{LE}})$. The facet $\tau_1\in\cO_{\mathsf{LE}}$ is mapped to the subcomplex $\pi(\tau_1)\subseteq \pi(\cO_{\mathsf{LE}})$ that contains the edge $\{(2,0),(3,0)\}$ and the isolated node $\{(1,1) \}.$}
    \label{fig:OLE}
\end{figure}
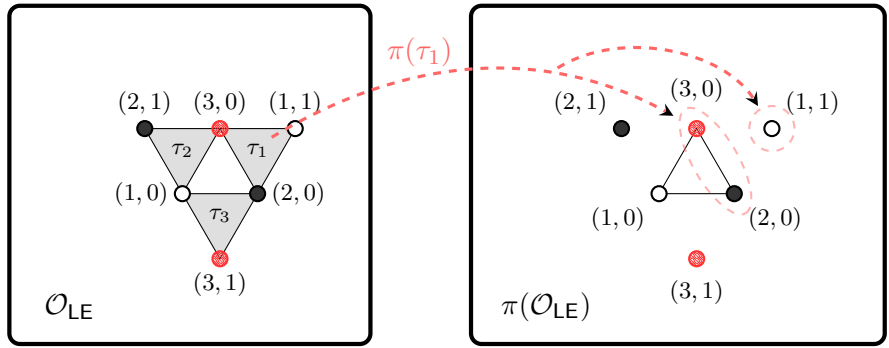

The second consistency-projection, $\tilde{\pi}$, is more specific and applies only to the realization complexes~$\cR(t)$, for $t\geq 1$. It is not using equality between values, but an equivalence relation between the vertices of $\cR(t)$, defined as follows. 
Given specific realizations $x_i \in \{0,1\}^t$ for the randomness of party~$i\in[n]$ at time~$t$, 
we say that nodes~$i$ and~$j$ are \emph{consistent} at time~$t$, denoted by 
\begin{equation}\label{eqn:consistent}
i \con{t} j
\end{equation}
if $K_{i}(t) = K_{j}(t)$. 
Note that once the randomness obtained by the parties up to round~$t$, $x_1,\ldots, x_n\in \{0,1\}^t$, are fixed, the event $K_{i}(t) = K_{j}(t)$ is deterministic (it has probability either~0 or~1). 
In the blackboard model, this equality  depends solely on the random bits received by nodes~$i$ and~$j$ during the first $t$ rounds. 
However, in the message-passing model, this equality also depends on the actual assignment of parties' port numbers.  
Also, it is worth observing that once $K_{i}(t) \ne K_{j}(t)$ in a specific instance of randomness, the two nodes $i$ and $j$ become inconsistent for the rest of the execution. However, they become aware of this fact only at the next round, where knowledge is exchanged (in both the blackboard and the message-passing model)\cite{moses-book2003}. 

Let $\rho = \{(i,x_i) :  i\in [n]\}$ be a facet of $\cR(t)$. We define the complex $\tilde{\pi}(\rho)$ as the complex on vertex-set $V(\rho)=\{(i,x_i) :  i\in [n]\}$ such that, for every non-empty $I\in[n]$, 
\begin{equation}\label{eq:pitilde}
\{(i,x_i) :  i\in I\} \in  \tilde{\pi}(\rho) \iff \forall (i,j)\in I\times I, \  i\con{t} j.
\end{equation}
The consistency complex captures all the possible relations of consistency for all possible generations of random strings. The consistency-projection $\tilde{\pi}$ applied simultaneously to all the facets of $\cR(t)$ results in the complex
\begin{equation}
\tilde{\pi}(\cR(t))=\bigcup_{\rho\in\cR(t)}\tilde{\pi}(\rho), 
\end{equation}
where the union is taken on the facets of~$\cR(t)$. Note that
$\tilde\pi(\cR(t))$ is a subcomplex of $\cR(t)$; its topological structure will be vital for our analysis.

\subsection{Randomized Solvability of Tasks Revisited}

A facet of~$\cR(t)$ is called a \emph{realization} of the system at time~$t$. 
By definition, there are $2^{nt}$ different realizations at time~$t$. 
Also, since each source of randomness generates a single bit uniformly at random at each round, we have 
\[
\Pr[R_i(1,\dots,t) = x]=2^{-t}
\]
for every $x \in\{0,1\}^t$.
It directly follows that the probability that a node~$i\in[n]$ receives random string $x\in\{0,1\}^t$ during the first $t$~rounds is $2^{-t}$, regardless to which randomness source node~$i$ is connected. However, there might be correlations between different nodes, whenever they are connected to the same randomness source (an information that is not given to the nodes a priori). 
Let $\rho=\{(1,x_1),\ldots (n,x_n) \}\in\cR(t)$ be a realization of the system at time~$t$, and let $\alpha\in\cA$ be a randomness-configuration of the system. We have 
\[
\Pr[\rho \mid \alpha]= 
\Pr\Big[\bigwedge_{(i,j)\in \alpha} R_j(1,\dots,t)= x_i\Big].
\]

We introduce a novel definition for solvability. This definition will be shown to be equivalent to the definition using facets of $\cP(t)$. 
\begin{definition}\label{def:solve_R}
A realization $\rho \in \cR(t)$ solves a symmetry-breaking task~$\cO$ if there exists~$\tau\in\cO$ and a name-preserving simplicial map 
$
\delta: \tilde{\pi}(\rho) \to \pi(\tau).
$
\end{definition}

Note that, in this definition, the map $\delta$ is not asked to be name-independent, 
since this property will be provided by the structure the projections impose.
Since every realization $\rho\in\cR(t)$ has some probability to occur at a given time, one can compute the probability of solving the task at time~$t$ by summing up the probabilities of the realizations that solve the task at time~$t$. 
For any $t\geq 1$, let $\cS(t)$ be the set of all the realizations 
of the system after $t$ rounds  that solves the task~$\cO$, and let us define the probability of $\cS(t)$ given some randomness-configuration $\alpha\in\cA$ as the sum of the probability of its facets, that is, 
\[
\Pr\left[\mathcal{S}(t) \mid \alpha \right] = \sum_{\sigma\in \cS(t)}\Pr\left[\sigma \mid \alpha \right].
\]
Again, by Kolmogorov's zero--one law, the limit when $t$ goes to infinity of the probability of $\cS(t)$ is equal to~0 or~1. Moreover, this new notion of solvability is equivalent to the (algorithmic) notion of solvability of Definition~\ref{def:solvabilityAS}.

\begin{lemma}
An input-free symmetry-breaking task $\cO$ is eventually solvable given a randomness-configuration~$\alpha \in\cA$ if and only if
\begin{equation}\label{eqn:solvability_by_cS(t)}
\lim_{t\to \infty} \Pr\left[\mathcal{S}(t) \mid \alpha \right] 
=1,
\end{equation}
where  $\mathcal{S}(t)$ is the set of all the realization $\sigma\in\cR(t)$ that  solve~$\cO$ at time~$t$.
\end{lemma}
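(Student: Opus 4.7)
The plan is to establish the lemma by transferring both the probability masses and the solvability predicates from $\cP(t)$ to $\cR(t)$, facet by facet, via the facet-isomorphism $h:\cP(t)\to\cR(t)$ noted in Section~\ref{sec:isomorphism}. Unfolding Definition~\ref{def:solvabilityAS}, eventual solvability amounts to $\lim_{t\to\infty}\sum_{\sigma\text{ solves }\cO}\Pr[\sigma\mid\alpha]=1$, so it suffices to prove, for every $t\geq 1$, the per-round identity
\[
\sum_{\sigma\in\cP(t)\,:\,\sigma\text{ solves }\cO}\Pr[\sigma\mid\alpha]\;=\;\sum_{\rho\in\cR(t)\,:\,\rho\text{ solves }\cO}\Pr[\rho\mid\alpha],
\]
and then take $t\to\infty$. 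The probability identity $\Pr[\sigma\mid\alpha]=\Pr[h(\sigma)\mid\alpha]$ is immediate from the isomorphism, since the knowledge vector $(K_i(t))_i$ is a deterministic function of the randomness vector $(x_i)_i$, making the events ``$\sigma$ occurs'' and ``$h(\sigma)$ occurs'' coincide. The substance therefore lies in showing that $\sigma$ solves $\cO$ in the sense of Definition~\ref{def:solve_P} if and only if $h(\sigma)$ solves $\cO$ in the sense of Definition~\ref{def:solve_R}.

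The key structural observation is that $\pi(\sigma)$ and $\tilde{\pi}(h(\sigma))$ have identical name-indexed simplicial structure: a subset $I\subseteq[n]$ forms a simplex of $\pi(\sigma)$ precisely when $K_i(t)=K_j(t)$ for all $i,j\in I$, and this is exactly the condition $i\con{t}j$ that defines a simplex of $\tilde{\pi}(h(\sigma))$. Thus the two projections are aligned along a common equivalence relation on $[n]$, which I would use as the bridge between the two solvability definitions.

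For the forward direction, given a name-preserving name-independent map $\delta(i,K_i(t))=(i,f(K_i(t)))$ from $\sigma$ to some $\tau\in\cO$, I would define $\tilde{\delta}:\tilde{\pi}(h(\sigma))\to\pi(\tau)$ by $\tilde{\delta}(i,x_i):=(i,f(K_i(t)))$; each simplex of $\tilde{\pi}(h(\sigma))$ consists of vertices sharing the same $K$-value, which are mapped to vertices sharing the same output value $f(K)$, i.e.\ to a simplex of $\pi(\tau)$. For the reverse direction, given a name-preserving simplicial map $\delta:\tilde{\pi}(h(\sigma))\to\pi(\tau)$ with $\delta(i,x_i)=(i,y_i)$, whenever $K_i(t)=K_j(t)$ the pair $\{(i,x_i),(j,x_j)\}$ is an edge of $\tilde{\pi}(h(\sigma))$ and its image must be an edge of $\pi(\tau)$, forcing $y_i=y_j$. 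Hence $f(K):=y_i$ (for any $i$ with $K_i(t)=K$) is well-defined, and $\delta'(i,K_i(t)):=(i,f(K_i(t)))$ is a name-preserving and name-independent map $\sigma\to\tau$; it is automatically simplicial because $\sigma$ and $\tau$ are full simplices on $[n]$ and $\delta'$ maps $V(\sigma)$ into $V(\tau)$.

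The main obstacle I anticipate is the reverse direction, where one has to argue that the edge-preservation requirement of $\delta$ into $\pi(\tau)$ is strong enough to force a single well-defined output value per consistency class, thereby yielding full name-independence of the lifted map. This hinges on the specific structural feature that $\pi(\tau)$ admits only simplices whose vertices share a common value, which is precisely what makes the consistency projection the correct topological counterpart to the name-independence constraint. Once the facet-wise equivalence is in place, summation gives the per-$t$ probability identity, taking $t\to\infty$ yields the stated equivalence, and the Kolmogorov zero--one dichotomy established earlier applies verbatim on the $\cR(t)$ side to ensure the limit lies in $\{0,1\}$.
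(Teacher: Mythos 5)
Your proposal is correct and follows essentially the same route as the paper: both directions hinge on the facet isomorphism $h$ together with the observation that $i\con{t}j$ is by definition $K_i(t)=K_j(t)$, so that simplices of $\tilde{\pi}(h(\sigma))$ correspond exactly to sets of vertices with equal knowledge, and the reverse direction extracts name-independence from the fact that simplices of $\pi(\tau)$ consist of equal-valued vertices. Your explicit per-$t$ identity of the two probability sums is a slightly more careful spelling-out of the bookkeeping the paper leaves implicit, but the substance is identical.
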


\begin{proof}
First, assume $\sigma \in \cP(t)$ solves $\cO$. We show that $h(\sigma)\in \cR(t)$ solves $\cO$ for $h: \cP(t) \to \cR(t)$ the name-preserving simplicial map defined in Section~\ref{sec:isomorphism}. 
Fix $t$ and $\sigma\in\cP(t)$ that solves the task, and set $\rho = h(\sigma)$.
According to Definition~\ref{def:solve_P}, there is a name-preserving name-independent simplicial map $\delta:\sigma \to \tau$. where $\tau=\delta(\sigma)$ is a facet of~$\cO$. %

Given $\sigma$ and $\rho$, we can define a name-preserving simplicial map (in fact, an isomorphism) $\tilde h :\rho \to \sigma$ (being viewed as complexes) that for any 
$i\in[n]$ takes $(i,x_i)\in \rho$ to $(i,K_i)\in \sigma$; note that $\tilde h$ is the uniqe name-preserving simplicial map between $\rho$ and $\sigma$.

We claim that
$\lambda \triangleq \delta \circ \tilde h=\delta(\tilde h(\cdot))$ is a name-preserving simplicial map
$\lambda: \tilde\pi(\rho)\to \pi(\tau)$. 
Indeed,
(1)~$\lambda$~is name preserving: this follows immediately since $\tilde h,\delta$ are both name-preserving.
(2)~$\lambda$~preserves simplices: 
fix a realization $\rho=\{(i,x_i) : i\in[n]\} \in \cR(t)$ and let $K_i(t)$ be the knowledge of party~$i$ at time~$t$ given the realization~$\rho$. Then $\sigma=\tilde h(\rho)=\{(i,K_i(t)): i \in [n]\}$; in particular $\tilde h((i,x_i))=(i,K_i(t))$, since $\tilde h$~preserves names.
Let $\rho'$ be a facet in~$\tilde\pi(\rho)$. By the definition of~$\tilde\pi$, for any two vertices ${(i,x_i),(j,x_j)}\in \rho'$ it holds that $i\con{t} j$. Thus, by the definition of the $\sim$ relation, $K_i(t)=K_j(t)$ and thus 
$\{(i,K_i(t)),(j,K_j(t))\}\in \pi(\sigma)$. Since the consistency relation $\sim$ is transitive, the same argument holds for any subset of vertices in~$\rho'$. Hence,
$\lambda(\rho')$ is a simplex in~$\pi(\sigma)$.
We conclude that, if $\Pr[\sigma\mid\alpha]>0$, 
and $\sigma$ solves~$\cO$ by Definition~\ref{def:solve_P} via the map $\delta$,
then for $\rho=h(\sigma)$ and $\tau=\delta(\sigma)$ we have
$\Pr[\rho\mid\alpha]>0$, and $\rho$ solves $\cO$ by Definition~\ref{def:solve_R} via the map
$\lambda: \tilde\pi(\rho)\to \pi(\tau)$.

\medskip
Conversely, assume that $\rho\in \cR(t)$ solves $\cO$. We show that the facet $h^{-1}(\rho) \in \cP(t)$ solves $\cO$ for $h: \cP(t) \to \cR(t)$ the name-preserving simplicial map from Section~\ref{sec:isomorphism} (recall that $h$ induces an isomorphism on facets, thus it has a unique inverse for~$\rho$).
Let $\tau\in\cO$ and  $\delta:\tilde\pi(\rho)\to\pi(\tau)$ be given for $\rho$ by Definition~\ref{def:solve_R}.
Let $\sigma=h^{-1}(\rho)$ be given by the isomorphism. 
We claim that $\sigma$ solves $\cO$ via the map $\lambda = \delta\circ h$.
First, note that $\lambda:\sigma\to \tau$ is a name-preserving simplicial map, which follows since $h$ and $\delta$ are both name-preserving simplicial maps. This also implies that $\lambda(\sigma)=\delta(h((\sigma))=\tau$ is a facet of~$\cO$. 
Next, we argue that $\lambda$ is name-independent, namely, that any two parties with the same knowledge give the same output. 
Let $(i,K_i(t)),(j_,K_j(t))$ be two vertices in~$\sigma$, such that $K_i(t)=K_j(t)$. 
By definition, $i\con{t} j$. %
If so, then~$\{(i,x_i),(j,x_j)\}$ is a simplex in~$\tilde\pi(\rho)$, where $(i,x_i)=h((i,K_i(t)))$ and  $(j,x_j)=h( (j,K_j(t)) )$. 
This simplex
must be mapped by the simplicial map~$\delta$ to some simplex $\{(i,v),(j,v)\}\in \pi(\tau)$ (recall that simplices in $\pi(\tau)$ consist of nodes with similar outputs, Eq.~\eqref{eqn:pi}). 
Since the argument holds for arbitrary two nodes, it easily extends to any subset of nodes in~$\sigma$ that have identical knowledge, which proves that~$\lambda$ is a name-independent.
\end{proof}

In Figure~\ref{fig:overview} we illustrate the relations between the different complexes of this work.
\newboolean{SHOW_CP}
\setboolean{SHOW_CP}{false}     %

\begin{figure}[htb]
\begin{center}
\small
\begin{tikzpicture}
\matrix (m) 
[matrix of math nodes,row sep=3.5em,column sep=6em,minimum width=2em] 
{
\cR(t) & \cP(t) & \cO \\
|[name=cCR]|  
\displaystyle\bigcup_{\mathclap{\rho\in\cR(t)}}\tilde\pi(\rho)
& \ifthenelse{\boolean{SHOW_CP}}{
\color{gray!80!white}
\displaystyle\bigcup_{\mathclap{\sigma\in\cP(t)}}\pi(\sigma)
}{} 
&|[name=cCO]| %
\displaystyle\bigcup_{\mathclap{\tau\in\cO}}\pi(\tau)
\\
}; 
\path[-stealth]  
(m-1-2) edge node [above] {$\delta:\sigma\to\tau$} node [below] {\footnotesize (Def.~\ref{def:solve_P})} (m-1-3)
;
\path[-stealth] 
(m-1-3) edge node [auto] {$\pi$} (cCO)
(m-1-1) edge [stealth-stealth] node [auto] {$h$} (m-1-2)
(m-1-1) edge node [auto] {$\tilde\pi$} (cCR)
;
\path[-stealth]
(cCR) edge [bend right=45, shorten >= 2pt] 
    node [below] {\footnotesize (Def.~\ref{def:solve_R})}
    node [above=3pt] {$\delta:\tilde\pi(\rho)\to\pi(\tau)$}
    (cCO);

\ifthenelse{\boolean{SHOW_CP}}%
{
\tikzset{every path/.style={opacity=0.4}}
\path[-stealth]
(m-1-2) edge node [auto] {$\pi$} (m-2-2)
(cCR.east|-m-2-2) edge [stealth-stealth,shorten >=3pt] node [auto] {$h$} (m-2-2)    
;}%
{}
\end{tikzpicture}
\end{center}
\caption{\small A summary of our topological complexes and the relations between them.}
\label{fig:overview}
\end{figure}
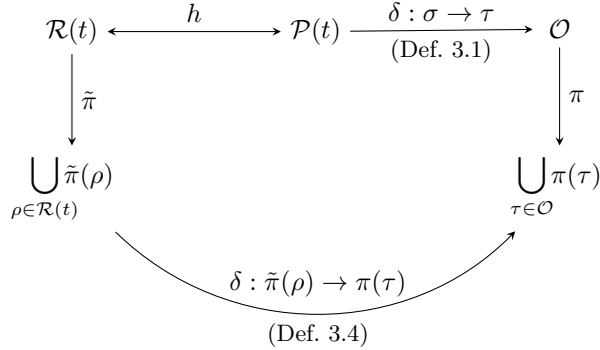

\section{Solvability of Leader Election via Topology}
\label{sec:le}

In the remainder of this paper we will consider the task of leader election, $\cO_\mathsf{LE}$. In this section we will discuss the conditions on randomness-configurations $\alpha\in\cA$ that make $\cO_\mathsf{LE}$ eventually solvable. We begin in Section~\ref{sec:blackboard_LE} with the blackboard model.
In Section~\ref{sec:port} we will consider the message-passing model.
Recall that the leader election task is defined by
$\mathsf{LE}=(\cI,\cO_\mathsf{LE},\Delta)$ with
$\cI = \{ (i,\bot):i\in[n]\}$ and $\Delta:\cI\to 2^{\cO_\mathsf{LE}}$ that maps the single facet of $\cI$ to the entire complex~$2^{\cO_\mathsf{LE}}$.
Further recall that  $\cO_\mathsf{LE}$ has facets 
\[%
\tau_i=\{(1,0),\dots,(i-1,0),(i,1),(i+1,0),\dots,(n,0)\},
\]%
for any $i\in[n]$,
and thus $\pi(\cO_\mathsf{LE}) = \bigcup_{\tau\in\cO}\pi(\tau)$
has  facets 
\[%
\{(i,1)\} \quad \text{and} \quad \{(j,0):j\in [n]\smallsetminus \{i\}\}
\]%
for any $i\in [n]$.
To ease readability, we will denote $\cO_\mathsf{LE}$
simply by $\cO$
from this point and on, as the task is clear from context.

\medskip
Intuitively, leader election is eventually solvable, in either model, if the algorithm can break symmetry, which in our topological view amounts to reaching some~$\rho\in\cR(t)$ with positive probability, such that $\tilde\pi(\rho)$ has an isolated vertex---that process will be the leader.
Impossibility is obtained when for any time $t$, for any~$\rho\in\cR(t)$ with positive probability $\tilde\pi(\rho)$ does not contain an isolated vertex and thus cannot be mapped to any $\pi(\tau)$ for $\tau\in\cO$.

In the blackboard  model (Section~\ref{sec:blackboard_LE}) we show that processes with the same randomness will always be connected in $\tilde\pi(\rho)$. Then, the only way to solve leader election is if there exists a process with its unique randomness source---eventually, the randomness will distinguish this node from any other node (with probability 1) so this node will be isolated in~$\tilde\pi(\rho)$ for any $\rho\in\cR(t)$ for which the realization of randomness that node obtained by time~$t$ differs form all the realizations of all other nodes.

In the message-passing model (Section~\ref{sec:port}), we show that one plus the dimension of any facet in~$\tilde\pi(\rho)$ is a multiple of~$g$, the GCD of $n_1,n_2,\ldots$, where $n_i$ is the number of processors connected to source~$\mathbf{R}_i$. 
Then, an isolated vertex $\tilde\pi(\rho)$ cannot exist unless the GCD is~1.
On the other hand, if the GCD is 1, we show an algorithm that leads to reducing the dimensions of certain facets 
(i.e., when looking on the evolution of $\rho\in\cR(t)$ as time goes by, that is, considering states $\rho'\in \cR(t')$ for $t'>t$ with $\Pr[\rho' \mid \rho]>0$)
until reaching a facet with dimension 0, that is, an isolated node.

\subsection{The Blackboard Model}
\label{sec:blackboard_LE}
With the above formulation we can determine the solvability of leader election in the blackboard model as a function of the specific configuration~$\alpha\in\cA$ of the system:
\begin{theorem}\label{thm:LE-BB}
Assume $k\le n$ distinct randomness sources are available to $n$ parties, where for any $i\in[k]$, there are exactly $n_i$ parties connected to~$\mathbf{R}_i$.
Then, leader election in the blackboard model 
is eventually solvable if and only if 
there exists a source~$i\in[k]$ such that $n_i=1$.
\end{theorem}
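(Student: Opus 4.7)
My plan is to turn the theorem into a purely structural statement about the consistency complex $\tilde{\pi}(\rho)$ and then reduce both directions to this structural observation, using the randomness-independence between distinct sources for the positive direction.

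The starting point is to analyze $\pi(\cO_\mathsf{LE})$. Each facet $\pi(\tau_i)$ consists of an isolated vertex $(i,1)$ together with the $(n-2)$-simplex $\{(j,0):j\neq i\}$. Because the map $\delta:\tilde{\pi}(\rho)\to \pi(\tau_i)$ required by Definition~\ref{def:solve_R} is name-preserving, the vertex $(i,x_i)$ in $\tilde{\pi}(\rho)$ must be sent to $(i,1)$, and any neighbor of $(i,x_i)$ in $\tilde{\pi}(\rho)$ would have to be mapped to a neighbor of $(i,1)$ in $\pi(\tau_i)$ — but there are none. Hence a realization $\rho\in\cR(t)$ solves leader election if and only if $\tilde{\pi}(\rho)$ contains an isolated vertex; conversely, if $(i,x_i)$ is isolated in $\tilde{\pi}(\rho)$, then sending $(i,x_i)\mapsto(i,1)$ and $(j,x_j)\mapsto(j,0)$ for $j\neq i$ is a name-preserving simplicial map into $\pi(\tau_i)$, because every simplex of $\tilde{\pi}(\rho)$ not containing $i$ has vertex set $\{(j,x_j):j\in I\}$ for some $I\subseteq[n]\smallsetminus\{i\}$ and thus maps into the facet $\{(j,0):j\neq i\}$ of $\pi(\tau_i)$.

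Next I would prove the central structural lemma for the blackboard model: if parties $i$ and $j$ are connected to the same randomness source, then $K_i(t)=K_j(t)$ for every $t\geq 0$, hence $i\con{t} j$ for every realization $\rho\in\cR(t)$. This is by induction: $K_i(0)=K_j(0)=\bot$; assuming $K_i(t-1)=K_j(t-1)$ and $X_i(t)=X_j(t)$ (guaranteed by the shared source), equation~\eqref{eqn:knowledgeBB} gives $K_i(t)=(K_i(t-1),X_i(t),M_i)$ with $M_i=\{K_\ell(t-1):\ell\neq i\}$, and the multisets $M_i$ and $M_j$ differ only in which copy of the common value $K_i(t-1)=K_j(t-1)$ is removed, so $M_i=M_j$. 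The symmetry of the blackboard, where the same multiset is seen by every node, is essential here.

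The impossibility direction follows immediately. If every source has $n_i\geq 2$, every node shares its source with at least one other node, so by the lemma, every vertex of $\tilde{\pi}(\rho)$ has a neighbor in $\tilde{\pi}(\rho)$ for every realization. No realization solves $\cO_\mathsf{LE}$, so $\Pr[\cS(t)\mid\alpha]=0$ for all $t$ and leader election is not eventually solvable.

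For the solvability direction, assume without loss of generality that $n_1=1$ and party $1$ is the unique party connected to $\mathbf{R}_1$. Any other party $j\neq 1$ receives bits from an independent source, so $\Pr[X_1(1,\dots,t)=X_j(1,\dots,t)]=2^{-t}$. Once $X_1(r)\neq X_j(r)$ at some round $r$, the second coordinate of the knowledge vector forces $K_1(r)\neq K_j(r)$, and by the recursive structure of~\eqref{eqn:knowledgeBB} this inconsistency is preserved at all later rounds. A union bound yields that with probability at least $1-(n-1)2^{-t}$, party $1$ is inconsistent with every other party at time $t$, so $(1,x_1)$ is isolated in $\tilde{\pi}(\rho)$ and, by the first paragraph, $\rho$ solves $\cO_\mathsf{LE}$. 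Hence $\Pr[\cS(t)\mid\alpha]\to 1$, giving eventual solvability by Definition~\ref{def:solvabilityAS}. The main subtlety I would expect to handle carefully is the multiset equality in the inductive lemma, which relies on the fact that the blackboard presents an unordered view; any asymmetry in how nodes perceive each other's messages would break the argument and indeed is what opens the door to the stronger positive result in the message-passing model.
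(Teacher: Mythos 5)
Your proof is correct and follows essentially the same route as the paper's: characterizing solvability of a realization by the presence of an isolated vertex in $\tilde{\pi}(\rho)$, showing that shared sources force equal knowledge via the blackboard's multiset symmetry, and bounding the probability that the uniquely-sourced party's random string collides with another party's. The only cosmetic differences are that you state the isolated-vertex characterization and the knowledge induction explicitly (the paper asserts the latter in one line), and you use a union bound where the paper counts equiprobable realizations via Lemma~\ref{lem:equiprob-states}; both yield the same $1-O(2^{-t})$ lower bound.
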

\begin{proof}

We prove the theorem separately for $k=1$ and for $k>1$. For each case we show that  %
Eq.~\eqref{eqn:solvability_by_cS(t)} 
holds if and only if $n_i=1$.

\medskip

\noindent
{\emph{Base case ($k=1$):}}
Let the randomness-configuration be such that $k=1$, that is, all the parties are connected to~$\mathbf{R}_1$ and see exactly the same stream of randomness. 
In particular, any realization that gives 
two different parties different randomness strings, has zero probability.
\begin{description}
\item[`if' direction:]
Since there is only a single source in the system ($\mathbf{R}_1)$, if~$n_1=1$ then the entire network contains a single party, $n=1$.
In this case leader election is trivial:
For any time~$t>0$, the complex
$\cR(t)$ has only 0-dimension facets: $\{(1,x_1)\}$ for any $x_1\in\{0,1\}^t$. 
For any $\rho\in \cR(t)$ it holds that  $\tilde\pi(\rho)= \rho$. %
Further, $\cO$ reduces to a single isolated node $\tau = \{(1,1)\}$, and thus $\pi(\cO)=\pi(\tau)=\{(1,1)\}$. 
It then follows that, for any $\rho=\{(1,x_i)\}\in \cR(t)$ 
there exists a name-preserving simplicial map $\delta:\tilde\pi(\rho) \to \pi(\tau)$, i.e., the map that takes $(1,x_i)$ to $(1,1)$. 

Thus, for every~$t$, it holds that $\cS(t)$ contains \emph{all} the facets of~$\cR(t)$, because any such realization solves the task $\mathsf{LE}$. 
For any~$t$, as well as in the limit~$t\to\infty$,
\(
\Pr\left[ \cS(t) \mid \alpha \right]
=1.
\)

\item[`only if' direction:]
For the other direction, let $\alpha'\in\cA$ be a randomness-configuration in which there is no source $i$ with $n_i=1$. 
Since there is only a single source ($k(\alpha')=1$) we have $n_1=n>1$ and we need to show that leader election is not eventually solvable.
Towards contradiction, assume that $\rho\in \cR(t)$ is a realization that solves LE, say, via a map $\delta: \tilde\pi(\rho) \to \pi(\tau_j)$ for some $j\in[n], \tau_j\in\cO$. 
By definition, any simplicial map~$\delta$
must preserve simplices. Let $(i,x_i)\in \tilde\pi(\rho)$ be the vertex  that is mapped to $(j,1)\in \pi({\tau_j})$. Since $\{(j,1)\}$ is a facet in~$\pi({\tau_j})$, the vertex $(i,x_i)$  must be isolated in~$\tilde\pi(\rho)$.

However, this  implies that $\Pr[\rho \mid \alpha']=0$, since $\alpha'$ dictates that all parties share the same randomness source and their views (randomness and blackboard content) are identical. 
Therefore, for any realization~$\rho'$ that has a non-zero probability (given~$\alpha'$), the projection $\tilde\pi(\rho')$ must be a single facet of dimension exactly $n-1$, that is, $\{(i,x) : {i\in [n]}\}$, for some value $x\in\{0,1\}^t$. On the other hand, $\tilde\pi(\rho)$ has a 0-dimension facet, and since $n>1$ its probability is zero given~$\alpha'$.

It follows that 
for any $t$, as well as in the limit,
$
\lim_{t\to \infty}\Pr\left[ \cS(t) \mid \alpha' \right]=0.
$
\end{description}

\noindent
\emph{General case ($k>1$):}
Assume we are given a randomness-configuration~$\alpha\in\cA$ with $k=k(\alpha)>1$ distinct randomness sources.
Without loss of generality assume
$0< n_1 \le n_2 \le \dotsb \le n_k$.

\begin{description}
\item[`if' direction:]
Assume $\alpha\in\cA$ satisfies $n_1=1$. 
At any time~$t$ there are exactly $2^{kt}$ unique realizations $\rho \in \cR(t)$ with positive probability conditioned on~$\alpha$, and recall that all such realizations are equiprobable  (Lemma~\ref{lem:equiprob-states}).
Let us define 
\begin{align*}
\cS_1(t) \overset{\text{def}}{=} 
\Big\{ \rho= \{(1,x_1),\ldots,(n,x_n)\} \in \cR(t) \ \Big\vert\  
\forall i>1, x_i\ne x_1\Big \}
\end{align*}
to be the set of all the realizations $\rho\in\cR(t)$  
in which the randomness obtained by the first party is unique, $x_1 \ne x_i$ for all $i>1$. 
There are $2^t\cdot (2^t-1)^{k-1}$ such realizations with positive probability given~$\alpha$.
Note that each such realization solves the task $\mathsf{LE}$ via the (unique) name-preserving map
$\delta:\tilde\pi(\rho) \to \pi(\tau_1)$, 
thus $\cS_1(t) \subseteq \cS(t)$.
Since all the positive-probable realizations are equiprobable we have,
\begin{align*}
\Pr \left[\cS(t) \mid \alpha \right]
&\ge 
\Pr \Big [\bigcup_{\rho \in \cS_1(t)} \rho \mid  \alpha \Big] 
= 
2^t (2^t-1)^{k-1} \cdot 2^{-kt} 
= \frac{(2^t-1)^{k-1}}{2^{t(k-1)}}\\
& \ge 1- \frac{(k-1)2^{t(k-2)}}{2^{t(k-1)}} 
= 1-\frac{k-1}{2^t}.
\end{align*}
From the above, $\lim_{t\to\infty}\Pr \left[\cS(t)\mid \alpha \right] = 1$ as required.

\item[`only if' direction:] 
Assume a randomness-configuration~$\alpha'$ (with $k(\alpha')>1$) in which $n_1>1$ (hence, there exists no $i$ with $n_i=1$).
Let us fix a time~$t$, and let $\rho \in \cR(t)$ be a realization that solves the task $\mathsf{LE}$ 
via~$\delta_j: \tilde\pi(\rho) \to \pi(\tau_j)$. 
In order for $\delta_j$ to be a name-preserving  simplicial-map, it is required that $(j,x_j)\in V(\tilde\pi(\rho))$ is isolated in~$\tilde\pi(\rho)$, since the vertex with name~$j$ is a 0-dimensional facet of~$\pi(\tau_j)$. 

On the other hand, let $\rho'\in \cR(t)$ be a realization with positive probability given $\alpha'$, namely, $\Pr[\rho'\mid\alpha']>0$.
Since in~$\alpha'$ we have that $\forall i, n_i>1$, for any party~$j$ there must exist another  party $j'\ne j$ that is connected to the same randomness source as~$j$.
Hence, for any time~$t$, the randomness $j$ and $j'$ see is identical, $x_j = x_{j'}$.
Furthermore, in the blackboard model, equality of randomness is equivalent to equality of knowledge, since the knowledge of a party is just its randomness along with the content of the blackboard. 
Therefore, regardless of the realizations $\{x_i\}_{i\notin\{ j,j'\}}$ of the other parties, it holds that $j\con{t} j'$  and 
$\{(j,x_j),(j',x_{j'})\}$ is a simplex in~$\tilde\pi(\rho')$.
So, for any $\rho'$ with $\Pr[\rho'\mid\alpha']>0$ and any party~$j$, we get that $(j,x_j)$ is \emph{not} isolated in~$\tilde\pi(\rho)$.

These two arguments imply that for any facet $\rho\in \cR(t)$ that solves the task $\mathsf{LE}$ it holds that
\(
\Pr[\rho \mid \alpha']=0.
\)
If we let $\cS(t)$ denote all the realizations the solve leader election at time~$t$, the above proves that for any~$t$, as well in the limit,
\[
\Pr \left[\cS(t) \mid \alpha' \right]=  0.
\]
In particular, leader election is not eventually solvable in this case.
\qedhere
\end{description}
\end{proof}

\subsection{The Message-Passing Model}\label{sec:port}
We now turn to the message-passing model where nodes are indistinguishable and connected by point-to-point channels as a clique.
Each node (privately) labels its $n-1$ neighbours with unique labels from $\{1,\ldots, n-1\}$ 
in an arbitrary way, which we refer to as the node's port-numbers. 
Then, when a node sends a message to some port~$p$ it always reaches the same party; however, if a different node sends a message to (its own) port~$p$, it might reach a different party according to that node's private labeling.
The main difference from the blackboard model is that a party's knowledge might be affected by its port-numbering in addition to its randomness. Unlike the blackboard model, where similar randomness means similar knowledge, here parties may have different knowledge while observing the same randomness (however, if their randomness is different, their knowledge will be different as well).

We recall that the numbering of the ports is arbitrary. In the following we ask which configurations lead to solving leader-election \emph{regardless} of the specific port numbers.
Alternatively, we ask which configurations prevent any protocol from solving
leader-election for at least one port-numbers labeling. 
That is, given a randomness-configuration, we look at the \emph{worst case} for setting the port numbers and ask whether or not leader-election is eventually solvable. We term this question \emph{worst-case leader election}.

The following Theorem~\ref{thm:le-PN}, which is this section's main theorem, shows that \emph{worst-case} eventual solvability of leader election depends on the greatest common divisor (GCD) of the number of parties connected to each randomness source.
\begin{theorem}\label{thm:le-PN}
Assume $k\le n$ distinct randomness sources are available to $n$ parties, where for any $i\in[k]$, exactly $n_i \ge 1$ parties are connected to~$\mathbf{R}_i$. 
Worst-case leader election is eventually solvable if and only if 
$
\gcd(n_1,n_2,\ldots,n_k)=1.
$
\end{theorem}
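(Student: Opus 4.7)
The approach is to reduce solvability to a combinatorial property of the consistency complex $\tilde{\pi}(\rho)$: a realization $\rho\in\cR(t)$ solves leader election iff there exists a name-preserving simplicial map $\delta:\tilde{\pi}(\rho)\to\pi(\tau_j)$ for some $j\in[n]$, and since each $\pi(\tau_j)$ contains the isolated vertex $\{(j,1)\}$, this forces $\tilde{\pi}(\rho)$ to possess an isolated vertex, i.e., a node whose knowledge differs from every other node's. Hence the theorem reduces to asking under which $\alpha$ and port assignments a positive-probability realization can produce such an isolated vertex.

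\textbf{Impossibility direction} ($g:=\gcd(n_1,\ldots,n_k)>1$). I would exhibit an adversarial port assignment under which every positive-probability realization $\rho$ has all facets of $\tilde{\pi}(\rho)$ of size divisible by $g$. Partition the $n_i$ nodes attached to source $\mathbf{R}_i$ into $n_i/g$ blocks of size $g$, and index each node by a pair $(u,m)\in[n/g]\times[g]$, with nodes in the same block sharing the same source. Define the port function so that at node $(u,m)$, port $p$ points to $(\sigma_u(p),\mu_{u,p}(m))$, where $\sigma_u$ is a bijection onto the other blocks and $\mu_{u,p}$ is a permutation of $[g]$, chosen so that the group $G=S_g^{n/g}$ of within-block permutations acts equivariantly on the port structure and respects $\alpha$. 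An induction on $t$ using Eq.~\eqref{eqn:knowledgeMP} then yields $K_{\phi(i)}(t)=\phi(K_i(t))$ for every $\phi\in G$; in particular, any two nodes in the same block have identical knowledge, so each block sits inside a single facet of $\tilde{\pi}(\rho)$. Every consistency class therefore has size divisible by $g>1$, precluding an isolated vertex and hence leader election.

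\textbf{Solvability direction} ($g=1$). For \emph{any} port assignment I would describe an Euclid-style reduction on the sizes of consistency classes. With probability tending to $1$, after one round the classes of $\tilde{\pi}(\rho)$ are exactly the groups of nodes sharing a common randomness source, with sizes $n_1,\ldots,n_k$. Given two classes $A,B$ of sizes $a<b$, each node $v\in B$ uses its local port numbers to pick out a canonical subset $B'\subseteq B$ of size $b\bmod a$ --- e.g., the $b\bmod a$ nodes with the smallest port labels toward a designated element of $A$. With positive probability in a constant number of further rounds, the random bits received by the nodes of $B'$ will all differ from those received by the nodes of $B\setminus B'$, splitting $B$ into two strict sub-classes of sizes $b\bmod a$ and $b-(b\bmod a)$. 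Iterating this across pairs of classes performs Euclid's algorithm on their sizes, and since $\gcd(n_1,\ldots,n_k)=1$ this eventually produces a class of size $1$, i.e., an isolated vertex of $\tilde{\pi}(\rho)$. Kolmogorov's zero--one law, applied as in the preceding lemma, promotes this positive-probability event to $\lim_{t\to\infty}\Pr[\cS(t)\mid\alpha]=1$.

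\textbf{Main obstacle.} The delicate step is the impossibility direction: port numbers at the two endpoints of a channel are independent, so engineering a globally consistent $G$-equivariant port function is nontrivial, requiring simultaneous block-preservation, bijectivity, and commutation with the message-passing update rule. Organizing the nodes as $[n/g]\times[g]$ and defining ports through pairs $(\sigma_u,\mu_{u,p})$ converts this into a tractable bookkeeping exercise, but the inductive verification of equivariance must respect that Eq.~\eqref{eqn:knowledgeMP} aggregates neighbors \emph{in port order}, so the $\sigma_u$'s and $\mu_{u,p}$'s must be designed to compose consistently across rounds. The solvability side is more conceptual; its main subtlety is ensuring that the canonical pivot-selection step goes through for \emph{every} adversarial port labeling, which is precisely why the argument relies on each node's own local port numbering rather than on any global structure.
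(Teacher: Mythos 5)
Your overall architecture matches the paper's: reduce solvability to the existence of an isolated vertex in $\tilde\pi(\rho)$, prove impossibility for $g=\gcd(n_1,\dots,n_k)>1$ via an adversarial, symmetry-preserving port assignment, and prove solvability for $g=1$ via a Euclid-style reduction of consistency-class sizes followed by the zero--one law. The impossibility half is essentially the paper's Lemma~\ref{lem:dimension}, but with one overstatement that would fail as written: the full group $S_g^{n/g}$ of independent within-block permutations cannot act equivariantly on any port assignment when $g>1$. Equivariance of $\phi=(\phi_1,\dots,\phi_{n/g})$ forces $\mu_{u,p}\circ\phi_u=\phi_{\sigma_u(p)}\circ\mu_{u,p}$ for all $u,p$; taking $\phi_u=\mathrm{id}$ and $\phi_{\sigma_u(p)}$ arbitrary forces $\phi_{\sigma_u(p)}=\mathrm{id}$, a contradiction. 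What is achievable --- and what the paper constructs explicitly --- is a single \emph{diagonal} cyclic shift $f$ of order $g$ acting simultaneously on all blocks; its orbits of size $g$ inside each block are all you need to conclude $g\mid\dim(\gamma)+1$ for every facet $\gamma$ of $\tilde\pi(\rho)$, so this is a fixable slip rather than a structural problem.

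The solvability direction, however, has a genuine gap: your splitting mechanism cannot work. All nodes of a consistency class $B$ are, by construction, connected to the \emph{same} randomness source, so their random bits are identical in every round; the event ``the random bits received by the nodes of $B'$ all differ from those received by the nodes of $B\setminus B'$'' has probability zero, not positive probability. The auxiliary device of selecting a ``canonical subset $B'$ of size $b\bmod a$ via the smallest port labels toward a designated element of $A$'' is also not robust against an adversarial labeling: port numbers are private and locally arbitrary, so every node of $B$ may assign the same port number to the designated node of $A$, and nothing locally computable distinguishes a proper subset of $B$. The paper's resolution is that the symmetry inside $B$ must be broken using the randomness of the \emph{other} class: in \textsc{CreateMatching}, each node of $A$ (whose random bits do differ from $B$'s) randomly proposes to a node of $B$, and $B$ splits into the $|A|$ matched and $|B|-|A|$ unmatched nodes --- a subtraction step of Euclid rather than your division step --- after which the recursion continues on the resulting classes (Lemma~\ref{lem:dimReduce}). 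Without some such cross-class randomized selection your Euclid iteration never produces a first split, so this step needs to be replaced, not merely tightened.
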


We start by showing the following technical lemma that explains how the consistency-projected complex $\tilde\pi(\cR(t))$ changes with time. 
\begin{lemma}\label{lem:dimension}
For any $\alpha\in\cA$ such that
$\gcd(n_1,\ldots,n_k)=g$ there exists a way to number ports so that 
for any realization~$\rho\in\cR(t)$ that has positive probability $\Pr[\rho \mid \alpha]>0$, 
the dimension~$\dim(\gamma)$ of any facet $\gamma\in \tilde\pi(\rho)$ satisfies 
$
g \mid \dim(\gamma)+1.
$
\end{lemma}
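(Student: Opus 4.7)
I would prove the lemma by exhibiting a port numbering under which there is a cyclic symmetry of order $g$ acting on the $n$ parties that commutes with the protocol; this will force every consistency class to be a disjoint union of orbits of size $g$, giving the stated divisibility.

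\textbf{Setting up the symmetry.} Since $g \mid n_i$ for every $i \in [k]$, I can partition the $n_i$ parties connected to $\mathbf{R}_i$ into $n_i / g$ blocks of size $g$, and index the parties by triples $(i, b, s)$ with $i \in [k]$, $b \in [n_i/g]$, and $s \in \mathbb{Z}_g$. Let $\phi$ be the permutation of $[n]$ defined by $\phi(i, b, s) = (i, b, s+1 \bmod g)$. Every $\phi$-orbit has size exactly $g$, and $\phi$ preserves the source index of each party.

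\textbf{Constructing a $\phi$-equivariant port numbering.} For the party $u = (i, b, s)$ and another party $v = (i', b', s')$, define the \emph{relative offset} of $v$ from $u$ to be the triple $(i', b', s' - s \bmod g)$. As $v$ ranges over the other $n-1$ parties, the offset ranges over all triples $(i', b', d)$ in $\bigsqcup_{i'\in[k]} [n_{i'}/g] \times \mathbb{Z}_g$ except $(i,b,0)$, bijectively. Fix any enumeration of these $n-1$ offsets by the labels $\{1, \ldots, n-1\}$ and assign, at every party $u$, the port label to $v$ equal to the label of the offset from $u$ to $v$. This is a valid port numbering (bijective at each party), and by construction it is $\phi$-equivariant: for every port $p$ and every party $u$, $\pi_{\phi(u)}(p) = \phi(\pi_u(p))$, because $u$ and $\phi(u)$ see the same offsets at the same labels, and the party at a given offset from $\phi(u)$ is exactly $\phi$ applied to the party at that offset from $u$.

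\textbf{Inductively propagating the symmetry.} I claim that $K_u(t) = K_{\phi(u)}(t)$ for every $u \in [n]$ and every $t \geq 0$, no matter which realization of randomness occurs with positive probability given $\alpha$. The base case $K_u(0) = \bot = K_{\phi(u)}(0)$ is immediate. For the inductive step, first observe that $u$ and $\phi(u)$ share the same source, so $X_u(t) = X_{\phi(u)}(t)$ for every realization with $\Pr[\cdot\mid\alpha] > 0$. Then, using the recursive definition (\ref{eqn:knowledgeMP}) and $\phi$-equivariance, the $p$-th component of the incoming-knowledge tuple at $\phi(u)$ at time $t$ is $K_{\pi_{\phi(u)}(p)}(t-1) = K_{\phi(\pi_u(p))}(t-1)$, which by the induction hypothesis equals $K_{\pi_u(p)}(t-1)$, i.e.\ the $p$-th component at $u$. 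Combining the three components yields $K_u(t) = K_{\phi(u)}(t)$.

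\textbf{Conclusion.} From $K_u(t) = K_{\phi(u)}(t)$ we get $u \con{t} \phi(u)$ for every $u$ and every $t$, and since the consistency relation (\ref{eqn:consistent}) is an equivalence relation, every $\phi$-orbit is contained in a single consistency class. Hence each consistency class is a disjoint union of $\phi$-orbits, and its cardinality, which equals $\dim(\gamma) + 1$ for the corresponding facet $\gamma \in \tilde\pi(\rho)$, is a multiple of $g$. The main obstacle in this plan is verifying that the proposed port assignment is simultaneously a bijection at each party and genuinely $\phi$-equivariant; the trick of indexing ports by relative offsets (rather than by absolute destinations) handles both requirements at once and makes the inductive symmetry argument go through cleanly.
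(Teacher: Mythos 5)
Your proof is correct and follows essentially the same route as the paper's: both construct a port numbering invariant under a cyclic permutation of order $g$ that respects the source assignment, propagate $K_u(t)=K_{\phi(u)}(t)$ by induction on $t$, and conclude that every consistency class is a union of orbits of size $g$. Your ``relative offset'' construction is a cleaner, more transparent packaging of the paper's explicit modular formula for the port assignment, but the underlying argument is the same.
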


Note that the above suggests that worst-case leader election is not eventually solvable  when the GCD is larger then one, proving the ``only if'' direction of Theorem~\ref{thm:le-PN}.
\begin{corollary}\label{cor:le-imp-PN}
Assume $\alpha'\in\cA$ such that $\gcd(n_1,\ldots,n_k)=g$ and $g>1$, then there exists a way to assign port-numbers to channels so that leader election is impossible.
\end{corollary}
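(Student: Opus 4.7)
The plan is to derive the corollary directly from Lemma~\ref{lem:dimension} in three short steps: first fix the port assignment given by the lemma, then argue that any realization solving leader election must contain an isolated vertex in its consistency projection, and finally observe that the divisibility condition forbids such isolated vertices when $g>1$.

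First, I would invoke Lemma~\ref{lem:dimension} on $\alpha'$ to obtain a port-number assignment with the property that, for every realization~$\rho\in\cR(t)$ with $\Pr[\rho\mid\alpha']>0$, every facet $\gamma\in\tilde\pi(\rho)$ satisfies $g \mid \dim(\gamma)+1$. Since $g>1$, this forbids $\dim(\gamma)=0$, so $\tilde\pi(\rho)$ contains \emph{no} isolated vertex.

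Next, I would argue that any realization $\rho$ which solves $\cO_{\mathsf{LE}}$ (in the sense of Definition~\ref{def:solve_R}) must have an isolated vertex in $\tilde\pi(\rho)$. Suppose $\delta:\tilde\pi(\rho)\to\pi(\tau_j)$ is the name-preserving simplicial map for some $j\in[n]$. Recall that $\pi(\tau_j)$ consists of the isolated vertex $\{(j,1)\}$ and the disjoint facet $\{(i,0):i\in[n]\smallsetminus\{j\}\}$; in particular, $(j,1)$ is the unique vertex of $\pi(\tau_j)$ bearing name~$j$, and every simplex of $\pi(\tau_j)$ containing $(j,1)$ is just $\{(j,1)\}$ itself. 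Since $\delta$ is name-preserving, the vertex $(j,x_j)\in\tilde\pi(\rho)$ must be sent to $(j,1)$; hence any simplex of $\tilde\pi(\rho)$ containing $(j,x_j)$ must map into a simplex of $\pi(\tau_j)$ containing $(j,1)$, forcing it to be the singleton $\{(j,x_j)\}$. Thus $(j,x_j)$ is isolated in $\tilde\pi(\rho)$.

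Combining the two observations, no realization $\rho$ with $\Pr[\rho\mid\alpha']>0$ can solve $\cO_{\mathsf{LE}}$ under this port assignment. Therefore, for every $t\geq 1$,
\[
\Pr[\cS(t)\mid \alpha'] = \sum_{\rho\in\cS(t)}\Pr[\rho\mid\alpha']=0,
\]
so $\lim_{t\to\infty}\Pr[\cS(t)\mid\alpha']=0$ and leader election is not eventually solvable. The only nontrivial content here is Lemma~\ref{lem:dimension} itself, which is already stated; the corollary reduces to the easy observation that an isolated vertex in $\pi(\tau_j)$ must be pulled back to an isolated vertex in $\tilde\pi(\rho)$ by any name-preserving simplicial map, a fact that does not use anything beyond the definition of $\pi(\cO_{\mathsf{LE}})$ and name-preservation. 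The main obstacle in this section is not the corollary but the lemma it rests on, which requires exhibiting a concrete adversarial port assignment and tracking how consistency equivalence classes evolve under the message-passing dynamics.
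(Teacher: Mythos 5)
Your proof is correct and follows essentially the same route as the paper's: invoke Lemma~\ref{lem:dimension} to rule out isolated vertices in $\tilde\pi(\rho)$ for any positive-probability realization, note that a name-preserving simplicial map into $\pi(\tau_j)$ forces the vertex named $j$ to be isolated, and conclude $\Pr[\cS(t)\mid\alpha']=0$ for all $t$. The only difference is that you spell out the isolated-vertex pullback argument in more detail than the paper, which simply asserts it.
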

\begin{proof}
From Lemma~\ref{lem:dimension} we know that there exists a way to number ports such that 
any realization~$\rho\in\cR(t)$ for which~$\Pr[\rho \mid \alpha']>0$, has facets of dimension at least~$g-1 > 0$. In particular, there is no isolated vertex in~$\tilde\pi(\rho)$, as this will imply a facet~$\gamma$ with $\dim(\gamma)=0$, but $g \nmid \dim(\gamma)+1$, which is a contradiction. 
Since there is no isolated vertex in~$\tilde\pi(\rho)$, there exists no simplicial map from~$\tilde\pi(\rho)$ to $\pi(\tau_j)$ for any facet~$\tau_j\in\cO$ and thus $\rho$ does not solve leader election. 
We get that for any $t$, no realization that solves leader election has positive probability, $\Pr\left[\cS(t)  \mid \alpha' \right]=0$.
\end{proof}

\begin{proof}[Proof of Lemma~\ref{lem:dimension}.]
Let $\alpha\in\cA$ be a randomness-configuration in which $\gcd(n_1,\ldots,n_k)=g$. 
Split the $n$~parties into $g$ disjoint subsets of nodes where subset $i\in \{1,\ldots,g\}$ holds exactly $n_j/g$ parties which are connected to~$\mathbf{R}_j$ (for all $j\in[k]$). 
For this part only rename the parties as
$0,1,\ldots, n-1$ where the first $n_1$ parties are connected to the first source and the next $n_2$ parties are connected to the second one, etc.
We assign the $j$-th port ($j\in [n-1]$) of party~$i\in \{0,1,\ldots, n-1\}$ 
to be connected to party number
\[
\big( (i+j)\text{mod }g  + \lceil i/g \rceil \cdot g + \lceil j/g \rceil \cdot g\big)  \mod n.
\]
There exists an isomorphism~$f:\{0,\dotsc,n-1\}\to \{0,\dotsc,n-1\}$ that takes any party $i\in \{0,1,\ldots, {n-1}\}$ written as $i=r+m\cdot g$ for $r<g$, $m\in\mathbb{N}$, to the party $f(i)=(r+1 \mod g) + mg$. This isomorphism preserves the assignment of randomness source and it preserves port numbers. 
Thus, fixing any realization of randomness $(x_1,\ldots, x_n)$ at round~$t$, for any party~$i$ we get that $i \con{t} f(i)$: 
they are connected to the same source and thus have the same randomness ($x_i=x_{f(i)}$). Moreover, every message that party~$i$ receives from port~$j$, is also received by~$f(i)$ from its own port~$j$, due to the way we number ports.
This can be seen by induction; it trivially holds for round~$0$; now assume it  holds for round $t-1$. Note that if $i$ is connected to $p$ at its port $j$, then $f(i)$ is connected to $f(p)$ in its port~$j$; furthermore, if $p$ sees $i$ in its $j'$ port, then $f(p)$ sees $f(i)$ it its $j'$ port. Since $p,f(p)$ hold the same information at round $t-1$ by the induction hypothesis, $i$ and $f(i)$ will receive the same information from $p,f(p)$ respectively, and the claim will hold for time~$t$ as well. 
Let us denote the equivalence class of party~$i$ by
\begin{align*}
\llbracket i \rrbracket = \{ f^{(c)}(i) \mid c\in[g] \}.
\end{align*}
where $f^{(c)}=f(f(\ldots))$ denotes the $c$-th composition of $f$ with itself, and $f^{(0)}$ is the identity function (also note that $f^{(g)}=f^{(0)}$).
Let~$\rho \in \cR(t)$ be a realization such that  $\Pr[\rho \mid \alpha]>0$, and let $\gamma \in \tilde\pi(\rho)$ be a facet. 

\begin{claim}\label{clm:cylinder}
If $(i,y_i)\in \gamma$, then for any $j\in\llbracket i \rrbracket$ we have that $(j,y_i) \in \gamma$. 
\end{claim}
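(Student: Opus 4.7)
The plan is to leverage the property $i \con{t} f(i)$ that was already established in the paragraph immediately preceding the claim, and then to use transitivity of $\con{t}$ together with the maximality of the facet $\gamma$.

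First I would unpack what the hypothesis gives. The vertex $(i,y_i) \in \gamma \subseteq \tilde\pi(\rho)$ is the same as a vertex $(i,x_i) \in \rho$, where $x_i$ is the random string received by party $i$ by round $t$, so $y_i = x_i$. The goal is then to establish, for an arbitrary $j \in \llbracket i \rrbracket$, two things: (a) that $x_j = x_i$ (so the candidate vertex in $\rho$ with name $j$ is indeed $(j,y_i)$), and (b) that this vertex lies in $\gamma$.

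For (a), I would iterate the previously proven fact $i \con{t} f(i)$: since the relation $\con{t}$ is transitive (equality of knowledge is transitive), one obtains $i \con{t} f^{(c)}(i)$ for every $c \in [g]$, hence $i \con{t} j$ for every $j \in \llbracket i \rrbracket$. In particular $K_i(t) = K_j(t)$, and since $x_i$ and $x_j$ are encoded inside these knowledges (by Eq.~\eqref{eqn:knowledgeMP}), $x_j = x_i = y_i$. For (b), I would argue that $\gamma \cup \{(j, y_i)\}$ is a simplex of $\tilde\pi(\rho)$. Indeed, for every $(\ell, y_\ell) \in \gamma$, one has $i \con{t} \ell$ by the definition of $\tilde\pi$ in Eq.~\eqref{eq:pitilde}, and $j \con{t} i$ from part (a); transitivity of $\con{t}$ yields $j \con{t} \ell$. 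Thus the enlarged set is still pairwise consistent, so it is a simplex of $\tilde\pi(\rho)$ by Eq.~\eqref{eq:pitilde}. Since $\gamma$ is a facet (maximal simplex), this forces $(j, y_i) \in \gamma$, proving the claim.

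The main conceptual obstacle was actually resolved before the claim is stated, in the argument that the port-numbering scheme makes $f$ a symmetry of the communication pattern, yielding $i \con{t} f(i)$ by induction on the rounds. Given that result, the present claim is essentially a bookkeeping consequence of two standard properties: transitivity of the consistency relation $\con{t}$ and maximality of facets in a simplicial complex. The only point one must be careful about is not to confuse the role of $y_i$ as a label of a vertex of $\tilde\pi(\rho)$ with the random bit-string $x_i$ attached by $\rho$; they coincide on vertices of $\rho$, which is why the translation in step (a) is legitimate.
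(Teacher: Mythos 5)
Your proof is correct and follows essentially the same route as the paper's: establish $i\con{t}j$ by iterating $i\con{t}f(i)$ and transitivity, then use transitivity again against every vertex of $\gamma$ and the maximality of the facet to force $(j,y_i)\in\gamma$. The only cosmetic difference is how you justify that the vertex of $\rho$ named $j$ carries the value $y_i$ — you read $x_j=x_i$ off the equality of knowledges, while the paper notes that any other value would give $\Pr[\rho\mid\alpha]=0$ since $i$ and $j$ share a randomness source; both are valid.
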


\noindent\textit{Proof of claim.}
Suppose $(j,y_i)\not\in \rho$.  Then there exists $y_j\ne y_i$ such that $(j,y_j)\in \rho$ which implies  that $\Pr[\rho \mid \alpha]=0$ since $i$ and $j$ are connected to the same randomness source in~$\alpha$ according to the way we numbered parties and the definition of~$f()$.
Therefore, $(j,y_i)$ is a node of~$\tilde\pi(\rho)$, and we are left to show that $(j,y_i)\in \gamma$.
As argued above, we know that, for any time $t$, 
$i \con{t} f(i)$ and thus (for the realization~$\rho$)
\[
i \con{t} f(i) \con{t}  \dotsb \con{t} f^{(g-1)}(i).
\]
Since the $\con{t}$ relation is transitive, all these vertices are consistent with one another. 
In particular, note that $i \con{t} j$ since $j\in\llbracket i\rrbracket$.
Further, for any vertex $(x,y)\in \gamma$, we have that
$x\con{t} i$ by the definition of the projection~$\tilde\pi$ (Eq.~\eqref{eq:pitilde}) and the fact that $\gamma\in\tilde\pi(\rho)$.
Now, by the transitivity of the consistency operator 
and the fact that $i \con{t} j$,
we get that also $x \con{t} j$. 
Finally, since $\gamma$ is a facet of~$\tilde\pi(\rho)$, the above implies that $(j,y_i)\in \gamma$, 
which completes the proof of Claim~\ref{clm:cylinder}. 
\hfill$\diamond$

\medbreak

Finally, we conclude the proof of the lemma. Let $\gamma\in \tilde\pi(\rho)$ be a facet.  For any $(i,y_i)\in \gamma$, Claim~\ref{clm:cylinder} proves that also $(j,y_i)\in \gamma$ for any~$j\in\llbracket i\rrbracket$. Since the sets $\llbracket i\rrbracket$  form a partition of~$\{0,...,n-1\}$ and since $|\llbracket i\rrbracket|=g$ for any~$i$, we conclude that
$g\mid \dim(\gamma)+1$, which completes the proof of Lemma~\ref{lem:dimension}.
\end{proof}

\medskip
We now move to proving the other direction of Theorem~\ref{thm:le-PN}, that is, showing that if the GCD of the size of subsets that are assigned the same randomness source equals one, then leader election is solvable, \emph{regardless} of the specific port numbers. 
Towards this goal, we show the following technical lemma that
once again considers the structure of the projected complex $\tilde\pi(\cR(t))$ over time. The lemma suggests that facets in $\tilde\pi(\cR(t))$ (that stem from realizations with positive probability) eventually ``split'' into smaller facets. 
The change in their dimension is always a multiple of the GCD.

\begin{definition}
We say that a realization $\rho'\in\cR(t')$ \emph{succeeds} a realization $\rho\in\cR(t)$ and denote $\rho \prec \rho'$ if (i) $t'>t$ and (ii) for any $i\in[n]$, if $(i,x_i)\in \rho'$ then $(i,x_i(1,\ldots,t))\in \rho$.
\end{definition}

\begin{lemma}\label{lem:dimReduce}
Fix a randomness-configuration~$\alpha\in \cA$.
Assume a realization~$\rho\in\cR(t)$ with $\Pr[\rho \mid \alpha]>0$.
Let $\gamma_1,\gamma_2\in \tilde\pi(\rho)$ be two facets, 
where without loss of generality we assume  
$\dim(\gamma_1) \le \dim(\gamma_2)$.
For any $t'>t$, let $\Gamma(t')$ 
be the set of realizations $\rho'\in\cR(t')$ that satisfy the following three conditions:
(1) $\rho \prec \rho'$; 
(2) $\Pr[\rho' \mid \rho,\alpha]>0$ and 
(3) if  $\gamma$ is the largest facet in $\rho'$ that contains a node with name from $\textsf{names}(V(\gamma_2))$, then   
$\dim(\gamma)\le \max\{\dim(\gamma_1), \dim(\gamma_2)-\dim(\gamma_1)-1\}$.
It holds that
\[
\lim_{t'\to\infty} \Pr\left[
\Gamma(t')
\ \middle\vert\  \alpha, \rho \right] = 1.
\]
\end{lemma}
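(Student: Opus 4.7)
My plan is to mimic the Borel--Cantelli / Kolmogorov zero--one argument used earlier for eventual solvability. Concretely, I would exhibit, for each round $s>t$, an event $E_s$ depending only on the random bits produced at round~$s$ such that: (a)~each $E_s$ has a positive probability bounded away from $0$; (b)~the events $(E_s)_{s>t}$ are mutually independent given $(\rho,\alpha)$; and (c)~whenever $E_s$ occurs, the resulting realization $\rho'\in\cR(s)$ lies in $\Gamma(s)$. Since $\sum_{s>t}\Pr[E_s\mid\alpha,\rho]=\infty$ and the events are independent, the second Borel--Cantelli lemma yields $\Pr[\bigcup_{s>t}E_s\mid\alpha,\rho]=1$, and since each $E_s$ forces $\rho'\in\Gamma(s)$ this gives $\lim_{t'\to\infty}\Pr[\Gamma(t')\mid\alpha,\rho]=1$.

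To define $E_s$, I would first set $A=\textsf{names}(V(\gamma_1))$ and $B=\textsf{names}(V(\gamma_2))$ and unpack the structure at time~$t$. Since $\Pr[\rho\mid\alpha]>0$ and $\gamma_1,\gamma_2\in\tilde\pi(\rho)$, all nodes in $B$ share the same knowledge at time $t$, hence also the same random-bit string up to round~$t$; moreover $\alpha$ partitions $B$ into source-groups within which bits are forced to coincide at every future round as well. Using the assumption $\dim(\gamma_1)\leq\dim(\gamma_2)$ and a port-based matching between $A$ and $B$ in the spirit of the injection constructed in the proof of Lemma~\ref{lem:dimension}, I would pick a subset $S\subseteq B$ of size exactly $|A|$ that is a union of (parts of) source-groups, and set $T=B\setminus S$, with the crucial property that no single source feeds both $S$ and $T$.

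With $(S,T)$ in hand, I define $E_s$ to be the event that, at round $s$, every source feeding $S$ outputs $0$ while every source feeding $T$ outputs $1$. This constrains only finitely many independent Bernoulli bits, so $\Pr[E_s\mid\alpha,\rho]\geq 2^{-k}>0$ where $k$ is the number of sources touching $B$; and since the bits at different rounds are independent, so are the $(E_s)_{s>t}$. On $E_s$, every $S$-named node and every $T$-named node receive different bits at round~$s$, so their knowledge strings differ at coordinate $s$ and they become permanently inconsistent (once two nodes disagree they remain so forever, as noted earlier in the paper). Hence in $\tilde\pi(\rho')$ no facet can contain both an $S$-named and a $T$-named vertex, so any facet of $\tilde\pi(\rho')$ meeting $B$ has its $B$-part contained in $S$ or in $T$; its dimension is therefore at most $\max\{|S|,|T|\}-1=\max\{\dim(\gamma_1),\dim(\gamma_2)-\dim(\gamma_1)-1\}$, establishing $\rho'\in\Gamma(s)$.

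The main obstacle is the construction of $S$: the source-groups meeting $B$ need not themselves partition into subsets of sizes exactly $|A|$ and $|B|-|A|$, in which case the naive one-round split above is unavailable. One then has to combine a source-based split with the port-based matching from the proof of Lemma~\ref{lem:dimension}, and possibly iterate across several rounds, each one shrinking the largest consistency class meeting $B$ by one Euclid step, until the claimed bound is reached. Ensuring that this iterated construction still yields an infinite family of independent positive-probability events while preserving the Euclid-style reduction is where the main technical effort will lie.
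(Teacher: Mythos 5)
There is a genuine gap, and you have correctly located it yourself: the construction of the split $S,T$ of $B=\textsf{names}(V(\gamma_2))$ into pieces of sizes $|A|$ and $|B|-|A|$ such that no source feeds both pieces. This is not a technical detail to be deferred --- it is impossible in the generic case, and that case is exactly the one the lemma is needed for. In the intended application (the `if' direction of Theorem~\ref{thm:le-PN}), $\gamma_2$ is typically the set of \emph{all} $n_2$ nodes fed by a single source $\mathbf{R}_j$. Then every future round gives all nodes of $B$ the identical bit, so no event of the form ``sources feeding $S$ output $0$, sources feeding $T$ output $1$'' can ever separate them: your family $(E_s)$ is empty and the Borel--Cantelli scaffolding has nothing to stand on. The only available symmetry-breaking mechanism among the nodes of $B$ is their \emph{interaction with the nodes of $A$} through the (privately labelled) ports, and this is precisely what the paper's proof uses: it runs the interactive \textsc{CreateMatching} procedure (Algorithm~\ref{alg:matching}), in which each $V_1$-node repeatedly picks a port at random and sends a request, and each $V_2$-node accepts at most one request. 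Because the same port index points to different physical neighbours for different senders, this balls-into-bins process matches at least one new pair per iteration and terminates after at most $|V_1|$ iterations, splitting $V_2$ into a matched part of size exactly $|V_1|$ and an unmatched part of size $|V_2|-|V_1|$; the matched/unmatched status is a function of knowledge, so the two parts are permanently inconsistent and the dimension bound follows. Your reference to ``the port-based matching from the proof of Lemma~\ref{lem:dimension}'' points in the wrong direction: that lemma constructs an \emph{adversarial} port numbering that preserves symmetry, whereas what is needed here is a protocol that exploits arbitrary port numbers to break it.

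Two smaller remarks. First, your dimension accounting and the observation that inconsistency, once established, is permanent (so any facet of $\tilde\pi(\rho')$ meeting $B$ has all its vertices named in $B$) are both fine, and your final arithmetic $\max\{|S|,|T|\}-1=\max\{\dim(\gamma_1),\dim(\gamma_2)-\dim(\gamma_1)-1\}$ matches the claimed bound. Second, in the special case where the source-groups inside $B$ \emph{do} happen to partition into sizes $|A|$ and $|B|-|A|$, your one-round bit-fixing argument is a clean and even simpler route to the conclusion; but since the lemma must hold without that hypothesis, the proposal as written does not prove the statement.
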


\begin{proof}
Let $V_1 = \mathsf{names}(V(\gamma_1))$ and $V_2 = \mathsf{names}(V(\gamma_2))$ be the names of all nodes in the respective set.
Note that if at each round, parties send all their information to each other, this information can
simulate any randomized protocol the network can run, in the sense that the parties can output the same output of any other protocol.
In particular, it simulates the $\textsc{CreateMatching()}$ procedure depicted in Algorithm~\ref{alg:matching}, that essentially creates a matching between~$V_1$ and~$V_2$. The matching is performed as follows: each party in~$V_1$ randomly picks a node in~$V_2$, and sends a message to that node, asking to be matched with it. If a node in~$V_2$ received only a single matching-request message, it accepts it. If it received more than a single request, it accepts just one request and rejects the others. This process continues with the remaining unmatched nodes until all the nodes in~$V_1$ are matched (assuming $|V_1| \le |V_2|$).

\begin{algorithm}[ht]  
\caption{The CreateMatching Procedure}
\label{alg:matching}
\begin{algorithmic}[1]
\small

\Statex \textbf{Input:} $n$ identical parties, connected as a clique, where $n_1$ parties connected to $\mathbf{R}_i$ and  $n_2$ parties connected to $\mathbf{R}_j$ with $j\ne i$, and $\gcd(n_1,n_2)=1$.

\Statex 
\State \textbf{Initially:} $V_1$ is the set of all nodes connected to~$\mathbf{R}_i$ and $V_2$ is the set of all parties connected to~$\mathbf{R}_j$. 
The procedure assumes that this separation is already known to all the participating parties.
Ignore any other parties in the network, if they exist.

\Statex 
\Procedure{CreateMatching}{$V_1,V_2$}	
\Comment{Assuming (without loss of generality) $|V_1| \le |V_2|$}
\State All nodes in $V_1,V_2$ set themselves as \texttt{active}.
\Repeat
\State \LongText{Each \texttt{active} node in $V_1$ randomly picks an \texttt{active} 
neighbour from $V_2$ and sends a messages to that selected neighbour.}  
\label{step:matching:s1}
\State \LongText{Each node in $V_2$ that have received at least one message,  selects the minimal port from which a message has arrived, and sends an ACK message to that origin node (in $V_1$).}
\label{step:matching:s2}
\State \LongText{Nodes from $V_1$ that received an ACK message in the previous step set themselves to \texttt{done} and broadcasts this event to all their neighbours. Nodes in $V_2$ that \emph{sent} an ACK message in the previous step, set themselves to \texttt{done} and broadcast this event to all their neighbours.}
\label{step:matching:s3}
\Until{all $V_1$ nodes are \texttt{done}} \label{line:repeat_done}
\EndProcedure

\end{algorithmic}
\end{algorithm}

\begin{lemma}\label{lem:matching}
Let $n$ identical nodes be connected as a clique where $n_1$ ($n_2$) nodes are connected to the randomness source~$\mathbf{R}_i$ ($\mathbf{R}_j$, $j\ne i$). 
At the end of \textsc{CreateMatching()}, there exists a matching~$M$ between all the $n_1$ parties connected to~$\mathbf{R}_i$ and $n_1$ parties connected to~$\mathbf{R}_j$, and every party outputs whether it is matched or not.
\end{lemma}
\begin{proof}
The proof is rather straightforward. Call $V_1$ ($V_2$) the set of parties connected to $\mathbf{R}_i$ ($\mathbf{R}_j$). 
Consider the state after the first time the procedure reaches Line~\ref{step:matching:s1}. 
At this point, every node in $V_1$ has selected a single node in~$V_2$, which we consider as throwing $|V_1|$~balls into $|V_2|$~bins. 
In Line~\ref{step:matching:s2} each non-empty bin selects exactly a single ball and ``ignores'' the other balls in it. 
Since each ball is uniquely identified with a node in~$V_1$, this creates a matching between all the non-empty bins (nodes) in~$V_2$ to some nodes in~$V_1$. Nodes that belong to the matching now become \done, and the matching continues recursively with the remaining nodes. 
Since in every iteration there must be at least one non-empty bin, every iteration increases the size of the matching by at least one. Eventually, all the parties in~$V_1$ will be matched (since $|V_1|\le |V_2|$ by our assumption).
At this point, all the parties that are \done participate in the matching and all the remaining \ctive parties are unmatched. Note that all the parties learn when \textsc{CreateMatching()} has terminated 
(e.g., by counting the number of \done parties in~$V_1$ when reaching Line~\ref{line:repeat_done}).
\end{proof}

After the execution of \textsc{CreateMatching()}, the nodes whose names belong to $V_2$ has split into two subsets, $V_2 = V_m \cup V_{um}$ of ``matched'' and ``un-matched'' nodes, of sizes $|V_m|=n_1$ and $|V_{um}|=n_2-n_1$, respectively.
Let $t'$ be a time after this split happens, and let $\rho'$ be the realization that corresponds this execution, so $\rho \prec \rho'$.
Note that the knowledge of nodes in these two subsets must be different\footnote{I.e., their state $\in \{\mathsf{matched},\mathsf{unmatched}\}$ is function of their knowledge, hence their knowledge must be different in order to obtain different state.}.
By the definition of the projection~$\tilde\pi$, no facet in $\tilde\pi(\rho')$ can contain nodes from both $V_m$ and $V_{um}$, as they are inconsistent. 
Hence, the maximal dimension of any facets of $\tilde\pi(\rho')$ that contain some nodes with name from~$V_2$ is at most $\max\{n_1-1, n_2-n_1-1\}$. 
Note that \textsc{CreateMatching()} requires that the participating parties know their partition into $V_1,V_2$. 
This holds in our case since $\gamma_1,\gamma_2$ are two different facets and parties that belong to different facets have different knowledge (and in a single rounds they can be aware of this).
\end{proof}  %

With the above lemma in hand, we can prove the `if' direction of Theorem~\ref{thm:le-PN}. 
The intuition is that 
any realization~$\rho$ with positive probability whose projection $\tilde\pi(\rho)$ has a facet of dimension~$d\ge 1$ is eventually succeeded by $\rho'$ whose projection~$\tilde\pi(\rho')$ has facets with maximal dimension strictly less than~$d$.
At the beginning of the computation, each set of parties that are connected to the same randomness source, are consistent in their knowledge\footnote{To be more precise, the knowledge already might be different due to differences in parties' port numbers. This can only help us in reaching a leader. In the discussion we ignore this option and assume that knowledge differences only come from the randomness and its affect on being matched or unmatched.}, and form a facet in the projection of the relevant realization. The fact that the GCD of the sizes of these subsets is 1 implies that we can reduce the dimension of the (largest) facet again and again until we reach facets of dimension~0.

To illustrate the above in an algorithmic manner, if we start with two sets of sizes $n_1<n_2$, we can perform a matching between the sets, and turn off all the nodes that belong to $V_m$ in the matching. This leaves us with two new sets of parties, $V_1$ and $V_{um}$ of sizes $a=n_1, b=n_2-n_1$, respectively, where parties that belong to the same set connect to the same randomness source. 
Repeating this process again and again yields a subset of size $\gcd(n_1,n_2)$. The fact that $\gcd(n_1,n_2,\ldots,n_k)=1$ and that $\gcd()$ is an associative function, $\gcd(a,b,c)=\gcd(\gcd(a,b),c)$, means we can repeat the above process on different subsets of parties (in a similar way to the Euclidean algorithm) until the dimension of the maximal facet becomes~$0$.

\begin{proof}[Proof of Theorem~\ref{thm:le-PN}, `\textbf{if}' direction]

We just need to show that for any $\alpha\in\cA$ with $\gcd(n_1,n_2,\ldots,n_k)=1$ there exists a realization $\rho\in \cS$ with positive probability given $\alpha$. Then, by Kolmogorov's zero--one  Eq.~\eqref{eqn:solvability_by_cS(t)} holds.
We will assume, without loss of generality, that at the onset of the analysis shown here, $\tilde\pi(\cR(t))$ contains facets of dimensions $n_1-1,n_2-1,\ldots,n_k-1$. This can easily be achieved by letting the parties exchange their randomness until $k$ differences appear (if $k$ is known), or until the parties' randomness distinguish $k'$ subsets whose sizes' GCD is 1 (if $k$ is unknown, restarting the protocol whenever future randomness exhibits new differences among parties that previously belonged to the same set). This step eventually succeeds with probability 1. We set time~$0$ to be after this step has completed, and below consider only $t>0$.
Let $\alpha\in\cA$ be a randomness-configuration with $k=k(\alpha)$  in which $\gcd(n_1,n_2,\ldots,n_k)=1$.
Let 
\[
\cS^*(t) = \{ \rho \in \cR(t) \mid \dim(\tilde\pi(\rho))=0\}
\]
be the set of realizations whose consistency-projection has dimension~0, that is, it is composed of $n$ isolated vertices.
Clearly any realization in $\cS^*$ solves leader election, $\cS^*\subseteq \cS(t)$, so finding a realization in~$\cS^*$ with positive probability given~$\alpha$ suffices to complete the proof. 
Assume a realization~$\rho\notin \cS^*(t)$ with positive probability, $\Pr[\rho\mid \alpha]>0$.
Let $d_1,d_2,\ldots$ be the dimensions of $\tilde\pi(\rho)$'s facets. Since $\rho\notin\cS^*(t)$ we know that $d=\max(d_1,d_2,\ldots) > 0$.
We argue that  
$\gcd(d_1+1,d_2+1,\ldots)=1$. 
This will follow from the following lemma
\begin{lemma}
For any $\alpha\in\cA$, $t'>t$ and realizations $\sigma\in\cR(t), \sigma'\in\cR(t')$,
such that $\sigma \prec \sigma'$,
the (unique) name-preserving map $\delta:\tilde\pi(\sigma')\to\tilde\pi(\sigma)$ is a simplicial map.
\end{lemma}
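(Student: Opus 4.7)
The plan is to first make the map $\delta$ completely explicit, then reduce simpliciality to a single monotonicity-in-time statement about the consistency relation $\con{t}$. Since $\sigma\prec\sigma'$ and both complexes are chromatic with exactly one vertex per name, the vertex set $V(\tilde\pi(\sigma'))\subseteq V(\sigma')$ is $\{(i,x_i'):i\in[n]\}$ with $x_i'\in\{0,1\}^{t'}$, and $V(\tilde\pi(\sigma))\subseteq V(\sigma)$ is $\{(i,x_i):i\in[n]\}$ with $x_i=x_i'(1,\ldots,t)$. The unique name-preserving map must therefore send $(i,x_i')\mapsto (i,x_i'(1,\ldots,t))$, which is well-defined at the vertex level.

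The key observation to prove is monotonicity of consistency in time: for all $t<t'$ and all $i,j\in[n]$,
\[
i\con{t'} j \ \Longrightarrow\ i\con{t} j.
\]
This is immediate from the recursive definitions in~\eqref{eqn:knowledgeBB} and~\eqref{eqn:knowledgeMP}: in both communication models, $K_i(t')$ has $K_i(t'-1)$ as its first component, so by a trivial induction on $t'-t$, there is a projection that recovers $K_i(t)$ from $K_i(t')$. Consequently $K_i(t')=K_j(t')$ forces $K_i(t)=K_j(t)$, which is precisely $i\con{t} j$. I would phrase this as a small separate observation, since it may be reused elsewhere in the paper.

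Given this monotonicity, simpliciality of $\delta$ is automatic. Let $\{(i,x_i'):i\in I\}$ be a simplex of $\tilde\pi(\sigma')$. By the definition~\eqref{eq:pitilde} of $\tilde\pi$, this means $i\con{t'} j$ for every $i,j\in I$. By monotonicity, $i\con{t} j$ for every $i,j\in I$ as well, so $\{(i,x_i):i\in I\}=\delta(\{(i,x_i'):i\in I\})$ is a simplex of $\tilde\pi(\sigma)$, again by~\eqref{eq:pitilde}. Thus $\delta$ preserves simplices and is name-preserving by construction, establishing the claim.

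The only step that requires any care is the monotonicity observation, and even that is essentially a one-line unpacking of the recursive definition of knowledge; there is no topological content beyond that. If one wanted to be extra pedantic, one could also remark that $\delta$ is well-defined on $V(\tilde\pi(\sigma'))$, i.e.\ that the image indeed lies in $V(\tilde\pi(\sigma))$, but this is automatic because $\tilde\pi(\rho)$ shares its vertex set with the facet $\rho$ for any realization $\rho$.
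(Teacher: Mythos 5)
Your proof is correct and follows essentially the same route as the paper's: both reduce simpliciality to the observation that consistency is monotone in time because knowledge is cumulative (i.e., $K_i(t'-1)$ is recoverable as a component of $K_i(t')$, so $K_i(t')=K_j(t')$ forces $K_i(t)=K_j(t)$). Your write-up is slightly more explicit about the vertex-level definition of $\delta$ and the induction unpacking the recursive definition of knowledge, but there is no substantive difference.
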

\begin{proof}
Let $\tau'\in\tilde\pi(\sigma')$ be a facet. By the properties of~$\tilde\pi$ we have that for any two nodes $(i,x_i),(j,x_j) \in V(\tau')$ we have $i\con{t'} j$, with respect to the realization~$\sigma'$. 
Note that $(i,x_i(1,\ldots,t)),(j,x_j(1,\ldots,t))\in V(\sigma)$ since $\sigma\prec \sigma '$. 
It thus also holds that 
$i \con{t} j$ with respect to the realization~$\sigma$
since knowledge is cumulative (i.e., if the knowledge in time~$t$ makes them inconsistent, this holds also for any time $t'>t$).
Thus, $\delta(\tau')$ is a simplex in~$\tilde\pi(\sigma)$.
\end{proof}

Assume, towards contradiction, that $\gcd(d_1+1,d_2+1,\ldots)=g >0$, and let $\sigma_1,\sigma_2,\ldots,\sigma_t=\rho$ be the sequence of realizations that correspond to the computation, $\sigma_i\prec\sigma_{i+1}$ for any $i<t$.
Note that due to the above lemma, facets in $\tilde\pi(\sigma_{t-1})$ have dimension $\in \{ (-1)+\sum D \mid D\in 2^{\{d_1+1,d_2+1,\ldots\}} \}$ (where a sum of a set is the sum of the elements in the set), since facets in $\tilde\pi(\sigma_{t-1})$ must be composed of one or more facets of $\tilde\pi(\sigma_t)$. 
Let $d'_1,d'_2,\ldots$ be the dimensions of the facets of~$\tilde\pi(\sigma_1)$. The above implies that $\gcd(d'_1+1,d'_2+1,\ldots)=g >0$, which is a contradiction.
Therefore, unless all the facets of~$\tilde\pi(\rho)$ are of dimension~$0$, there must exist a facet~$\gamma_1$ with dimension strictly less than~$d$ (otherwise, if all facets have dimension~$d$, then $\gcd(d_1+1,d_2+1,\ldots)=d+1>1$). Now, we apply Lemma~\ref{lem:dimReduce} on $\rho$, where $\gamma_1$  is as defined above (i.e., the fact with $\dim(\gamma_1)<d$), 
and $\gamma_2$ is any facet with maximal dimension~$\dim(\gamma_2)=d$.
The lemma suggests that there exists a time $t'$ and a realization $\rho'\in\cR(t')$ that succeeds $\rho$, $\Pr[\rho' \mid \rho,\alpha]>0$, where the maximal facet in~$\tilde\pi(\rho')$ has dimension strictly less than~$d$, that is, $\dim(\tilde\pi(\rho'))<d$.\footnote{Note that multiple applications of Lemma~\ref{lem:dimReduce} might be needed, e.g., when multiple facets in~$\tilde\pi(\rho)$ are of dimension~$d$.}
Note that,
\begin{align*}
\Pr[ \rho' \mid \alpha] \ge \Pr[\rho'\mid \alpha,\rho]\Pr[\rho\mid \alpha] >0.
\end{align*}
Hence, $\rho'$ is a realization with positive probability, where facets in~$\tilde\pi(\rho')$ have dimension strictly less than the ones in~$\tilde\pi(\rho)$ we began with. 
We can now apply the same reasoning on~$\rho'$ and keep reducing the dimension of the projected complex, until we reach a realization~$\rho^*$ with $\Pr[\rho^* \mid\alpha]>0$, for which all facets of~$\tilde\pi(\rho^*)$ are of dimension~$0$. 
\end{proof}

\section{Conclusion}

This paper focuses on input-free symmetry-breaking tasks, and presents a topological framework for studying the solvability of such tasks by \emph{randomized} algorithms. The applicability of this framework was demonstrated by studying the solvability of leader election in environments where correlations may exist between the randomness sources assigned to the processing nodes. Thus, the paper is expected to help move the study of randomized algorithms under the umbrella of algebraic topology. 

Our analysis resulted in a complete characterization of the solvability of leader election by randomized algorithms. 
In Appendix~\ref{app:ni-task} we show that the same conditions required for solving leader election also suffice for solving any \emph{name-independent} task (but they are not necessary).
Extending 
this work to \emph{any} task $(\cI,\cO,\Delta)$ is an appealing research direction. 
A first step may consist of extending this paper's framework to input-free tasks for which the output complex~$\cO$ is not symmetric. An intriguing example is electing a leader and a deputy leader (where the latter is to be used as an immediate backup in case the leader fails), under the constraint that some nodes may only be leaders, some nodes may only be deputy leaders, some nodes may be either of the two, and some nodes may be neither. 
Another compelling direction is extending the communication model to networks with arbitrary structure.

\section*{Acknowledgements}
The authors would like to thank Ami Paz for fruitful discussions and insightful explanations on topology in distributed computations.
P.~Fraigniaud is supported in part by ANR projects DESCARTES and FREDDA. 
R.~Gelles is supported in part by ISF grant 1078/17.

\bibliographystyle{alphabbrv-doi}
\bibliography{rand}

\appendix

\section{Algebraic Topology: Basic Definitions}
\label{app:topology101}

We give here a brief survey of the topological terms used in this paper, and refer the reader to~\cite{Book-topology} for a more complete treatment of the subject.

An \emph{abstract simplicial complex}~$\cK$ is a nonempty set of sets (simplices) $\cK=\{\sigma_i\}_i$
and it holds that if $\sigma$ is a simplex then any non empty subset $\rho\subseteq \sigma$ is also a simplex, $\rho\in\cK$.
The elements of a simplex~$\sigma$ are called \emph{nodes} or \emph{vertices}, and are denoted by $V(\sigma)$. The set of all nodes, $V(\cK)=\bigcup_{\sigma\in\cK} V(\sigma)$ is called the \emph{node-set} of the complex.
The dimension of a simplex is $\dim(\sigma)=|V(\sigma)|-1$.
In particular, a single node $\sigma=\{v\}$ has dimension~$0$.
A \emph{facet} of~$\cK$ is a simplex that is not contained in any other simplex of~$\cK$. Note that the set of facets fully defines the complex. A facet of dimension~$0$ is called an \emph{isolated} node.
The dimension of a complex is the maximal dimension of its facets. A complex whose all facets have the same dimension is called \emph{pure}.

For two complexes, $\cK$ and $\cL$, we say that $\cK$ is a subcomplex of~$\cL$ if $\cK\subseteq\cL$.
For a set $X\subseteq V(\cK)$, the \emph{induced complex of~$\cK$ on~$X$} is the complex $\{\sigma\in\cK \mid V(\sigma)\subseteq X\}$.
A vertex map from~$\cK$ to~$\cL$ is any function $f: V(\cK)\to V(\cL)$. A \emph{simplicial map} $\delta:\cK\to\cL$ is a vertex map such that for any $\sigma\in\cK$ it holds that $\delta(\sigma)=\{\delta(v)\mid v\in \sigma\}\in\cL$, that is, it maps simplices in~$\cK$ to simplices in~$\cL$ (i.e., it preserves simplices).
Complexes $\cK$ and $\cL$ are said to be isomorphic if there exist simplicial maps $f:\cK\to\cL$ and $f^{-1}:\cL \to \cK$, such that for any $\sigma\in\cK$, $\sigma=f^{-1}(f(\sigma))$, and for any $\rho\in\cL$, $\rho=f(f^{-1}(\rho))$.
A \emph{chromatic} complex~$\cK$ is a complex augmented with a naming function $\mathsf{name}: V(\cK) \to C$ where $C$ is called the set of names (colors). A vertex map~$f: V(\cK)\to V(\cL)$ \emph{preserves names} 
if $\cK$ and $\cL$ are chromatic, and for any $v\in V(\cK)$ we have $\mathsf{name}(v)=\mathsf{name}(f(v))$.
In this paper all complexes are chromatic and all maps are name-preserving.

\section{Technical Lemmas}

The following lemma shows that, for a given time and randomness-configuration, all the global states with positive probability are equiprobable.
\begin{lemma}\label{lem:equiprob-states}
Given $t>0$ and $\alpha\in\cA$ where exactly $k=k(\cA)$ different randomness sources are connected to the parties in~$\alpha$,
define the set of $\alpha$-inconsistent randomness,
\[
\textsf{B}_\alpha = \Big\{ (x_1,\ldots,x_n)\in (\{0,1\}^{t})^{n} 
 \ \Big\vert\   \exists i,j\in[n], c\in[k] \text{ s.t. } x_i\ne x_j \text{ but } (i,c),(j,c)\in \alpha \Big\}.
\]
For any facet $\sigma\in\cR(t)$ 
with nodes $V(\sigma)=\{(i,x_i) \mid i \in [n]\}$,
\[
\Pr[\sigma \mid \alpha] =
\begin{cases}
 0  &   ( x_1,\dotsc, x_n) \in \textsf{B}_\alpha \\
 2^{-tk} & ( x_1,\dotsc, x_n) \notin \textsf{B}_\alpha
\end{cases}.
\]
\end{lemma}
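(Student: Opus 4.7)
The plan is to split into the two cases indicated by the statement and to compute the probability directly from the definition of the randomness model. Throughout, I will use the fact established in the paper that a realization at time $t$ is fully determined by the bits produced by each of the $k$ sources during the first $t$ rounds, and that for every $c\in[k]$ and every string $r\in\{0,1\}^t$ we have $\Pr[R_c(1,\dots,t)=r]=2^{-t}$, with the $k$ sources mutually independent.

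First, suppose $(x_1,\dots,x_n)\in\textsf{B}_\alpha$, so there exist $i,j\in[n]$ and $c\in[k]$ with $(i,c),(j,c)\in\alpha$ but $x_i\neq x_j$. Under configuration $\alpha$, parties $i$ and $j$ are both connected to source $\mathbf{R}_c$, so $X_i=X_j=R_c$ deterministically, and in particular $X_i(1,\dots,t)=X_j(1,\dots,t)$ with probability one. The event that party $i$ receives $x_i$ and party $j$ receives $x_j\neq x_i$ is therefore impossible, giving $\Pr[\sigma\mid\alpha]=0$.

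Next, suppose $(x_1,\dots,x_n)\notin\textsf{B}_\alpha$. Then for each source $c\in[k]$, all parties $i$ with $(i,c)\in\alpha$ agree on a common string, which I denote $r_c\in\{0,1\}^t$. The event ``$\sigma$ occurs'' is exactly the event $\bigwedge_{c\in[k]}\{R_c(1,\dots,t)=r_c\}$: indeed this is what it means for every party to receive the specified string, and conversely, fixing the length-$t$ prefix of every source uniquely determines each $x_i$. Since the $k$ sources $\mathbf{R}_1,\dots,\mathbf{R}_k$ are independent and each outputs i.i.d.\ uniform bits, we obtain
\[
\Pr[\sigma\mid\alpha]=\prod_{c=1}^{k}\Pr[R_c(1,\dots,t)=r_c]=\prod_{c=1}^{k}2^{-t}=2^{-tk}.
\]

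There is no real obstacle here; the only thing one needs to be careful about is the bookkeeping that the notation $k=k(\alpha)$ denotes exactly the set of sources actually used by $\alpha$ (as enforced by the definition of $\cA$), so no ``unused'' source contributes a spurious factor, and that independence across sources (not across parties) is what supplies the product. The isomorphism between facets of $\cR(t)$ and facets of $\cP(t)$ from Section~\ref{sec:isomorphism} then transports this probability to protocol-complex facets whenever needed elsewhere in the paper.
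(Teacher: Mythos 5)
Your proof is correct and takes the same route the paper does: the paper dismisses the lemma as following directly from the definition via $\Pr[\sigma \mid \alpha] = \Pr[(R_1,\ldots,R_n)(1,\ldots,t)=(x_1,\ldots,x_n) \mid \alpha]$, and your two-case computation (impossibility on $\textsf{B}_\alpha$, product of $k$ independent $2^{-t}$ factors otherwise) is just the explicit version of that one-line argument.
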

The proof follows directly from the definition since $\Pr[\sigma \mid \alpha] = \Pr[(R_1,\ldots,R_n)(1,\ldots,t)=(x_1,\ldots,x_n) \mid \alpha]$.
Recalling that the mapping~$h$ defined in Section~\ref{sec:isomorphism} induces a name-preserving isomorphism on the facets of $\cP(t)$ and $\cR(t)$,
we conclude:
\begin{corollary}
Given $t>0$ and $\alpha\in \cA$ with exactly $k=k(\cA)$ randomness sources connected to the parties,
for any $\sigma \in \cP(t)$ we have
\[
\Pr[ \sigma \mid \alpha] = \Pr[ h(\sigma) \mid \alpha]  \in \{0, 2^{-tk}\}.
\]
\end{corollary}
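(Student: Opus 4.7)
The plan is to leverage the fact, established in Section~\ref{sec:isomorphism}, that $h$ restricted to facets is a name-preserving isomorphism between the facets of $\cP(t)$ and those of $\cR(t)$. Hence the corollary reduces to showing that $\sigma$ and $h(\sigma)$ correspond to the same event over the underlying randomness space, after which Lemma~\ref{lem:equiprob-states} finishes the job.

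First, I would fix a facet $\sigma = \{(i, K_i(t)) : i \in [n]\} \in \cP(t)$ and let $\rho = h(\sigma) = \{(i, x_i) : i \in [n]\} \in \cR(t)$, where each $x_i \in \{0,1\}^t$ is the $t$-bit prefix of randomness that $K_i(t)$ records for party~$i$. Using the recursive definitions~\eqref{eqn:knowledgeBB} and~\eqref{eqn:knowledgeMP}, a straightforward induction on $t$ shows that $(K_1(t), \ldots, K_n(t))$ is a deterministic function of $(x_1, \ldots, x_n)$ (the port-numbering, in the message-passing case, is fixed by the model). Conversely, each $x_i$ is explicitly contained in $K_i(t)$. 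Therefore the event ``the system is in global state $\sigma$ at time $t$'' coincides, as a subset of the probability space, with the event ``the randomness realization at time $t$ is $\rho$.''

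Given this identification of events, we have
\[
\Pr[\sigma \mid \alpha] = \Pr[h(\sigma) \mid \alpha].
\]
Applying Lemma~\ref{lem:equiprob-states} to $h(\sigma) \in \cR(t)$ then gives
\[
\Pr[h(\sigma) \mid \alpha] =
\begin{cases}
0 & \text{if } (x_1, \ldots, x_n) \in \textsf{B}_\alpha,\\
2^{-tk} & \text{otherwise},
\end{cases}
\]
which lies in $\{0, 2^{-tk}\}$, yielding the corollary.

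There is no real obstacle here; the statement is essentially a transport of Lemma~\ref{lem:equiprob-states} across the facet-level isomorphism $h$. The one point that deserves care is the deterministic-function claim: while $h$ is many-to-one on individual vertices (different knowledge values may project to the same random-bit prefix), it is bijective on facets, because specifying the knowledge of \emph{all} $n$ parties at time $t$ pins down every random bit observed by every party in every earlier round, and vice versa. Once that observation is recorded, the equality of probabilities is immediate and the conclusion follows.
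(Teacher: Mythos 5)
Your proposal is correct and follows the paper's own route exactly: the paper derives this corollary by invoking the facet-level isomorphism $h$ from Section~\ref{sec:isomorphism} (which it justifies by the same induction showing knowledge is a deterministic function of the realization, and vice versa) and then applying Lemma~\ref{lem:equiprob-states}. No substantive difference to report.
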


\section{Name-independent
Input-Output Tasks Reduce to Leader Election}\label{app:ni-task}
As a consequence of the above, any \emph{name-independent} task can be solved in our model as long as leader election is possible. 
A task~$(\cI,\cO, \Delta)$ is name-independent if
$\Delta$ maps inputs to outputs in a name-oblivious way. Namely, for any possible input for the system, $\sigma\in\cI$, parties with the same input-value compute the same output-value, i.e., 
$(i,x),(j,x)\in \sigma \Rightarrow (i,o),(j,o) \in \Delta(\sigma)$.

by reducing the task to choosing a leader, who in turn computes the output in a centralized way.
\begin{theorem}
If Leader election is solvable by an anonymous network in the blackboard or message-passing model, then any distributed name-independent input-output task can be solved over the same model.
\end{theorem}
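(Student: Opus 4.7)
The plan is to use leader election as a centralizing primitive: once a unique leader exists, it can collect every party's input, compute an output assignment, and announce it. Concretely, given a protocol $P_{\mathsf{LE}}$ that solves leader election eventually (with probability~$1$) on the model at hand, I would build a protocol $P$ for the name-independent task $(\cI,\cO,\Delta)$ as follows. Every party first stores its input $v_i$ and then runs $P_{\mathsf{LE}}$ in the background; by hypothesis, with probability~$1$ there is some finite round after which a single party is flagged as the leader and every other party knows it is not the leader. Each party appends its input $v_i$ to all messages it produces, so that the knowledge acquired through $P_{\mathsf{LE}}$'s full-information transcript already carries the inputs along with the random bits.

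After the leader $\ell$ has been identified, the non-leaders send their inputs to $\ell$ and $\ell$ uses them to compute the outputs. In the blackboard model, every party simply writes its input on the board at round~$t_0$ and the leader reads the multiset $M=\{v_i : i\in[n]\}$; in the message-passing model, every party sends $v_i$ on all its ports, so $\ell$ receives $v_i$ from each of its $n-1$ incoming edges together with the port label. By name-independence of $\Delta$, there is a function $f:\mathrm{supp}(M)\to\cO$ such that assigning output $f(v_i)$ to every party $i$ yields a facet of $\Delta(\sigma)$ for the actual input simplex $\sigma$; the leader picks any such $f$ deterministically from $M$ (the existence of $f$ is exactly the name-independence of $\Delta$). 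Finally, $\ell$ publishes $f$: in the blackboard model it writes $f$ on the board, while in the message-passing model it sends $f$ on every port. Each party $i$ then outputs $f(v_i)$, and the leader outputs $f(v_\ell)$.

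The main subtlety, and the step I would be most careful about, is the \emph{detection} of when $P_{\mathsf{LE}}$ has terminated. In the topological framework used in the paper, a facet $\sigma\in\cP(t)$ ``solves'' leader election once $\tilde\pi(\rho)$ contains an isolated vertex, but parties observe this only one round later, when knowledge is exchanged (exactly the remark made in the paper's discussion of $\con{t}$). So I would have the leader election module explicitly output a flag $\mathsf{elected}\in\{0,1\}$ at each round, and only start the input-gathering phase once every party's flag has stabilized; this happens at a finite round with probability~$1$ by the eventual-solvability hypothesis. With this caveat, every party terminates with probability~$1$ and produces an output consistent with $\Delta$, so any name-independent task is solved whenever leader election is. A secondary remark is that asymmetry between the leader and the rest is what allows $P$ to break the symmetry between parties with distinct inputs but identical consistency classes: without a leader, the name-independent specification alone does not force distinct outputs, which is why the converse direction fails and why the theorem is stated as a one-way implication.
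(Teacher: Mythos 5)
Your proposal is correct and follows essentially the same route as the paper's own proof: elect a leader, have all parties ship their inputs to it (via the blackboard or over port-labeled channels), let the leader centrally compute a valid output assignment---using name-independence of $\Delta$ so that outputs can be keyed on input values rather than identities---and broadcast it back. The extra care you take about termination detection and about publishing the map $f$ rather than per-party outputs on the anonymous blackboard is a reasonable elaboration of details the paper leaves implicit, not a different argument.
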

\begin{proof}
Given a task $(\cI,\cO,\Delta)$, the parties 
perform leader election. Every party then sends the leader its inputs, either directly or via the blackboard.
The leader collects all the inputs (and records the respective port-number for each in the message passing model). Then, the leader solves the task~$(\cI,\cO,\Delta)$ by himself and distributes the outputs to its neighbours, either by publishing the respective output of each input on the blackboard or by sending the appropriate output to the corresponding port. 
\end{proof}
It is obvious that the other direction is invalid, as there are tasks that can be deterministically solved in both these models, regardless of the solvability of leader election.

\end{document}